\newtheorem{definition}{Definition}
\newtheorem{proposition}{Proposition} 
\newtheorem{corollary}{Corollary}
\newtheorem{counterexample}{Counterexample}
\begin{document}

\title{GemiRec: Interest Quantization and Generation for Multi-Interest Recommendation}
\author{Zhibo Wu, Yunfan Wu, Quan Liu, Lin Jiang, Ping Yang, Yao Hu}
\email{{wuzhibo, wuyunfan, liuquan, jianglin, jiadi, xiahou}@xiaohongshu.com}
\affiliation{%
  \institution{Xiaohongshu  Co., Ltd}
  \country{Beijing, China}
}

\newcommand{\codebook}{\textit{Interest Dictionary}}
\newcommand{\distribution}{Multi-Interest Posterior Distribution Module}
\newcommand{\prediction}{Multi-Interest Retrieval Module}
\newcommand{\codebookm}{Interest Dictionary Maintenance Module}
\newcommand{\ds}{MIPDM}
\newcommand{\ps}{MIRM}
\newcommand{\cs}{IDMM}
\newcommand{\rec}{GemiRec}
\newcommand{\scb}{sub-dictionary}
\newcommand{\scbs}{sub-dictionaries}

\newcommand{\tightparagraph}[1]{%
  \vspace{0.4em}%
  \textit{#1}\hspace{0.6em}%
}


\received{20 February 2007}
\received[revised]{12 March 2009}
\received[accepted]{5 June 2009}

\setlength{\textfloatsep}{1.6pt plus 1pt minus 1pt}
\setlength{\intextsep}{1.6pt plus 1pt minus 1pt}
\setlength{\floatsep}{1.6pt plus 1pt minus 1pt}
\setlength{\abovecaptionskip}{1.6pt plus 1pt minus 1pt}
\setlength{\belowcaptionskip}{1.6pt plus 1pt minus 1pt}
\setlength{\abovedisplayskip}{1.6pt plus 1pt minus 1pt} 
\setlength{\belowdisplayskip}{1.6pt plus 1pt minus 1pt} 


\begin{abstract}
Multi-interest recommendation has gained attention, especially in industrial retrieval stage. 
Unlike classical dual-tower methods, it generates multiple user representations instead of a single one to model comprehensive user interests. 
However, prior studies have identified two underlying limitations: 
The first is interest collapse, where multiple representations homogenize. The second is insufficient modeling of interest evolution, as they struggle to capture latent interests absent from a user's historical behavior.
We begin with a thorough review of existing works in tackling these limitations. Then, we attempt to tackle these limitations from a new perspective. Specifically, we propose a framework-level refinement for multi-interest recommendation, named \rec.
The proposed framework leverages \textit{interest quantization} to enforce a structural interest separation and \textit{interest generation} to learn the evolving dynamics of user interests explicitly.  
It comprises three modules:
(a) \codebookm{} (\cs) maintains a shared quantized interest dictionary.
(b) \distribution{} (\ds{}) employs a generative model to capture the distribution of user future interests.
(c) \prediction{} (\ps) retrieves items using multiple user-interest representations. 
Both theoretical and empirical analyses, as well as extensive experiments, demonstrate its advantages and effectiveness.
Moreover, it has been deployed in production since March 2025, showing its practical value in industrial applications.
\end{abstract}

\begin{CCSXML}
<ccs2012>
   <concept>
   <concept_id>10002951.10003317.10003347.10003350</concept_id>
       <concept_desc>Information systems~Recommender systems</concept_desc>
       <concept_significance>500</concept_significance>
   </concept>
</ccs2012>
\end{CCSXML}

\ccsdesc[500]{Information systems~Recommender systems}

\keywords{Recommender System, Multi-Interest Recommendation}


\maketitle

\section{Introduction}
\label{sec: introduction}
\begin{figure}[h]
    \centering
    \includegraphics[width=0.9\linewidth]{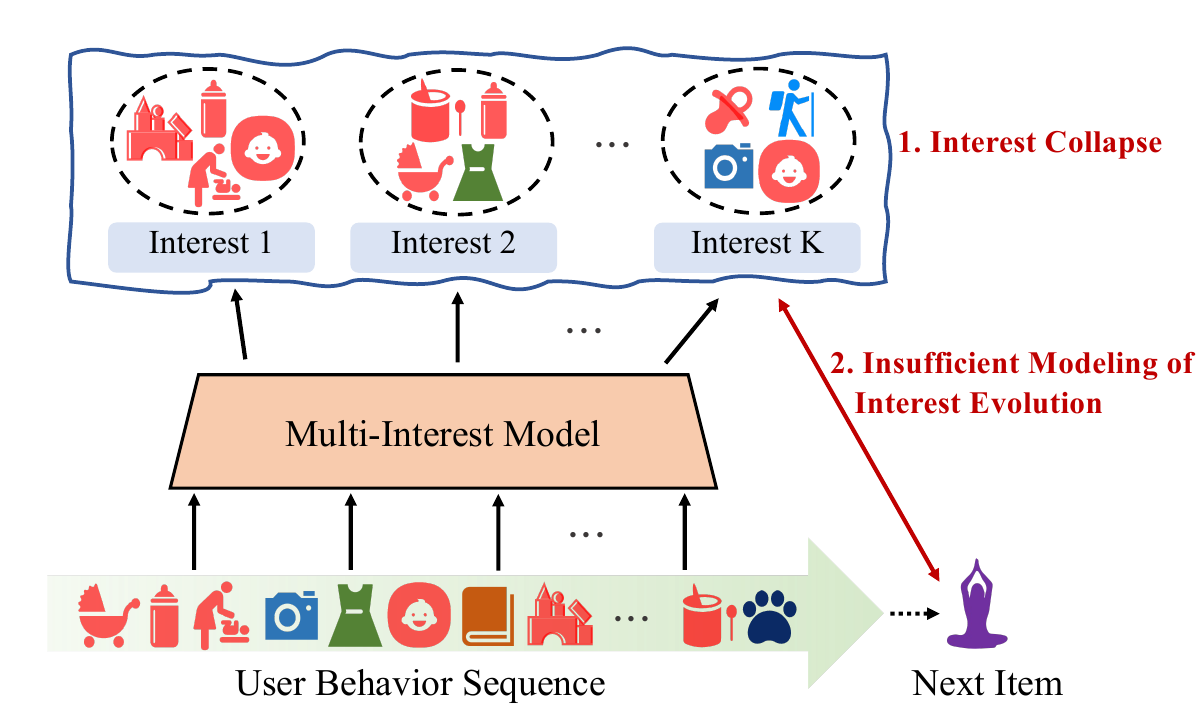}
    \caption{\small Illustration of the underlying challenges: (1). Interest collapse, where multiple interest representations primarily retrieve maternal and baby items (pink). (2). Insufficient modeling of interest evolution, where the learned interests fail to retrieve yoga-related items (purple) that are absent from the user's behavior sequence.}
    \label{fig:interpretation}
\end{figure}
Recommender systems have become an essential component of online platforms\cite{hwangbo2018recommendation, xie2022decoupled, zhang2023efficiently, zhou2023equivariant}. 
The industrial recommendation system is typically divided into four stages: retrieval, pre-ranking, ranking, and re-ranking. 
In the retrieval stage, which deals with billions of candidates, dual-tower architectures are widely adopted for their efficiency~\cite{covington2016deep, yi2019sampling}.
However, the single user representation may not fully capture the diverse nature of a user's interests~\cite{ohsaka2023curse,weller2025theoretical}, i.e, it tends to capture the dominant interest while neglecting others.

 
To address this limitation, multi-interest recommendation has been introduced. It generates multiple user representations, each representing distinct aspects of a user's interests, and updates the one with the maximum dot product to the target item.
Notable models MIND \cite{li2019multi} and ComiRec \cite{cen2020controllable} advanced this direction via dynamic routing and self-attention to extract multiple user interests. 

Despite the advancements of multi-interest recommendations, there still exist some underlying limitations. The first is interest collapse~\cite{li2022improving,du2024disentangled, zhang2022re4,xie2023rethinking,tan2021sparse}, where multiple user-interest representations homogenize, losing their distinctiveness. The second is insufficient modeling of interest evolution~\cite{wang2022target,tian2022multi, chen2021exploring}, which is indicated by the empirically observed performance degradation in capturing latent interests absent from user’s historical behavior.
We provide an intuitive illustration of the above limitations in Figure~\ref{fig:interpretation}.

To mitigate the interest collapse, prior studies, including CMI, Re4, SINE, REMI, SimRec, and DisMIR~\cite{li2022improving, zhang2022re4, tan2021sparse, xie2023rethinking, liu2024attribute, du2024disentangled}, made progress through
additional regularization to differentiate user-interest representations or learnable prototypes. However, they serve as a soft constraint rather than a structural separation; therefore, it does not preclude overlap. Meanwhile, overly strong soft constraints lead to the degradation of prediction accuracy~\cite {liu2022improving,goodfellow2016deep}. 
Building on these observations, we attempt to tackle this from a different perspective. 
We leverage quantization to assign each item to a certain category in an \codebook{}, serving as strict non-overlapping item clustering. In this way, semantically representative item embeddings amplify the structured semantic separation~\cite{lloyd1982least, arthur2006k}. This separation is maintained during training, where the positive item is associated with its corresponding category. Thus, what we need to know during inference is which k interest categories in the \codebook{} the user is interested in at the next time step.
By construction, we provide a theoretical analysis in Section \ref{sec: Theoretical Analysis}
that the quantization indeed induces a Voronoi partition~\cite{de2008computational}. Building on the property, it provides a non-trivial lower-bound of interest separation in principle, which can propagate to the user-interest representations under mild conditions, while the soft constraint regularization offers no such lower-bound.


To enhance the interest evolution modeling, prior studies such as PIMI~\cite{chen2021exploring}, MGNM~\cite{tian2022multi}, and TiMiRec~\cite{wang2022target}, made progress by incorporating temporal dynamics or soft-label distillation to strengthen contextualized interest modeling.
However, the bottleneck remains for two reasons:
First, as interest generation is integrated within the user tower, the latency restricts specific model design for capturing interest evolution. 
Second, interest generation is solely trained by the recommendation task, lacking an explicit objective to guide future interest learning. 
Motivated by these issues, we introduce a generative model decoupled from the user tower, which explicitly learns evolving interests through an independent next-interest prediction task.
This decoupled design, combined with a user-interest cache, offers flexibility in model complexity for interest generation, while essentially not increasing the inference latency.


Overall, to address the two underlying limitations of existing multi-interest recommendation methods, we propose a \underline{Ge}nerative \underline{M}ulti-\underline{I}nterest \underline{Rec}ommendation framework (named \rec{}). 
The framework centers on the \textbf{quantization and generation} of user interests. It consists of three modules: 
(a) \codebookm{} (\cs{}) maintains a shared vector-quantized \codebook{} containing discrete interest embeddings. 
(b) \distribution{} (\ds{}) employs a generative model
to capture the distribution of user interests at the next time step. 
(c) \prediction{} (\ps{}) retrieves items using multiple user-interest representations.

The main contributions of this work are:
\begin{itemize}[leftmargin=*, topsep=2pt, partopsep=0pt, itemsep=0pt, parsep=0pt]
    \item We propose a different perspective on tackling interest collapse and evolution through interest quantization and generation. Both theoretical and empirical analyses demonstrate its advantage. Further designed metrics validate the effectiveness.
    \item We provide practical guidance on further proposed joint optimization for interest quantization to better adapt its semantic separation to downstream recommendation tasks.
    \item For online deployment, we propose a top-$K$ user-interest indices cache to eliminate additional inference latency. 
    \item We conduct comprehensive experiments and online A/B tests, demonstrating its effectiveness. Furthermore, it has been successfully deployed in production, showing its practical value.
\end{itemize}

\section{Related Works}
\subsection{Vector Quantization}
Vector Quantization (VQ)~\cite{buzo1960speech} compresses a continuous representation space into a compact codebook, where each vector is approximated by a discrete code. 
Over time, advanced methods, such as product quantization~\cite{ge2013optimized, sabin2003product} and residual quantization~\cite{gray1998quantization, juang1982multiple, martinez2014stacked}, have been developed to reduce decoding errors.
A fundamental challenge of VQ is its inherently non-differentiable nature.
To address this, following VQ-VAE~\cite{van2017neural}, numerous studies~\cite{kang2020learning, lee2022autoregressive} have adopted the Straight-Through Estimator (STE)~\cite{bengio2013estimating} to enable gradient-based learning.
Recently, VQ has seen growing employment in recommendation~\cite{hou2023learning, rajput2024recommender}. 
However, the exploration of its potential for multi-interest modeling is limited.
\subsection{Multi-interest Recommendation}
Multi-interest was initially proposed in works~\cite{li2005hybrid, wang2007multi}, and gained traction after \text{MIND}~\cite{li2019multi} and \text{ComiRec}~\cite{cen2020controllable}, resulting in successive emergence of variant works.
With reference to recent survey~\cite{li2025multi}, we categorize multi-interest recommendations by concerns into interest collapse~\cite{li2022improving, xie2023rethinking, du2024disentangled,zhang2022re4,tan2021sparse,liu2024attribute}, cold start and fairness~\cite{tao2022sminet, zhang2022diverse}, interest evolution modeling~\cite{chen2021exploring, tian2022multi, wang2022target}, and others~\cite{wang2021popularity,chai2022user,liu2023co, shen2024multi}. 

Several works such as CMI, Re4, SINE, REMI, SimRec, and DisMIR~\cite{li2022improving, zhang2022re4, tan2021sparse, xie2023rethinking, liu2024attribute, du2024disentangled} made progress in interest collapse through additional regularization on interest embeddings or learnable prototypes, but they impose essentially a soft constraint rather than a structural separation, thus it does not preclude overlap between
retrieval sets of different interests. Meanwhile, overly strong soft constraints lead to the degradation of prediction accuracy~\cite {goodfellow2016deep}. 
Besides, works like PIMI, MGNM, and TiMiRe~\cite{chen2021exploring, tian2022multi, wang2022target} enhanced the contextualized interest distribution modeling by incorporating temporal dynamics or soft-label distillation.
While effective, it lacks specified model design and explicit supervision for modeling interest evolution, which may limit the capacity to sufficiently capture the dynamic user preferences over time.

\section{Methods} \label{sec: methods}
In this section, we introduce the methodology in detail. 
Table \ref{tb:symbol} summarizes the Mathematical symbols used throughout the paper.
\begin{figure*}[htbp]
    \centering
    \includegraphics[width=0.95\textwidth]{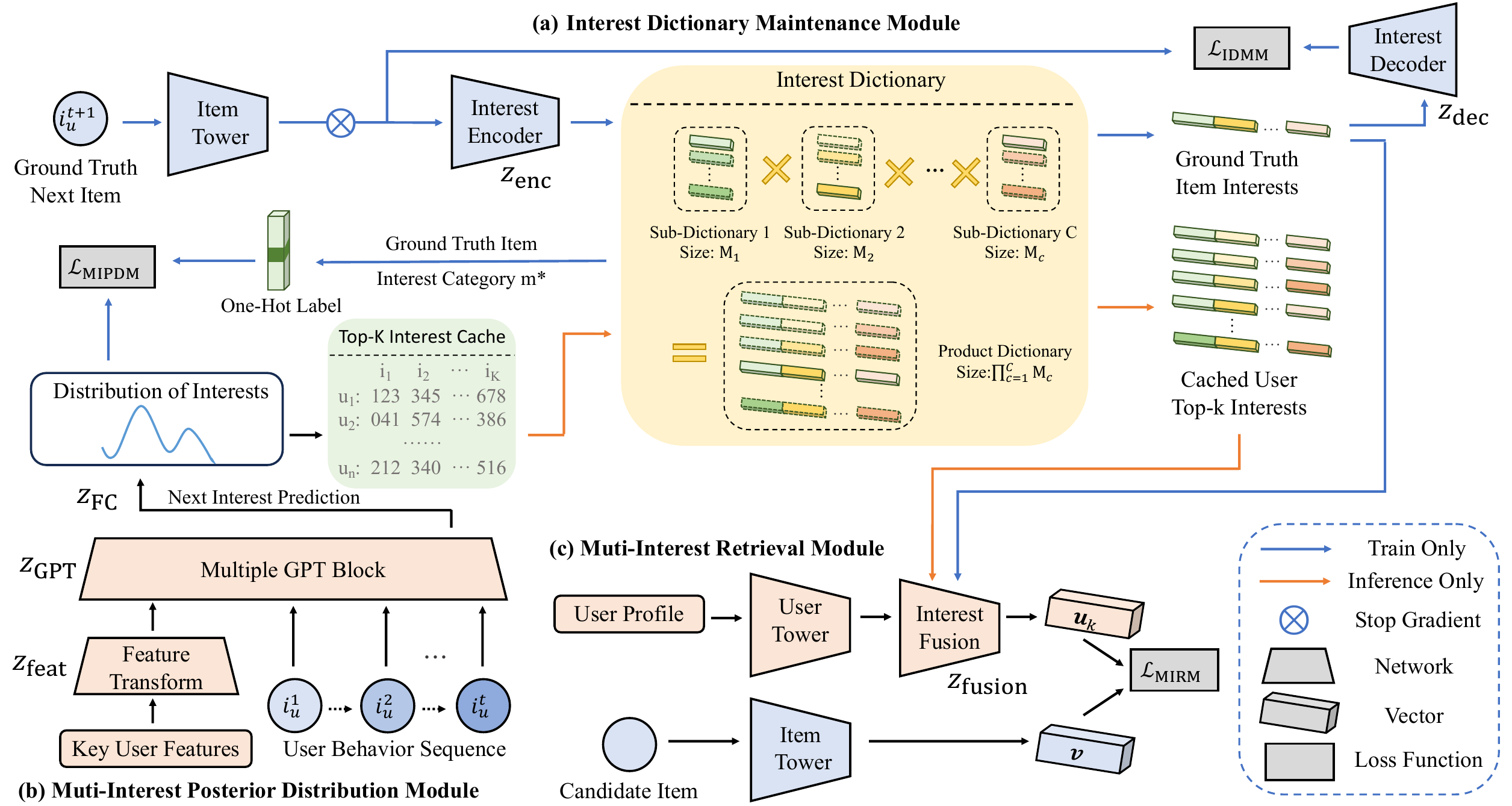}
    \caption{\small Overview of the \rec, illustrating the integration and interaction between the \cs, \ds, and \ps ~. 
    (a) \cs: maintaining multiple vector-quantized \scbs{} containing discrete interest embeddings; (b) \ds: employing a decoupled generative model to capture the distribution of user interests at the next time step; (c) \ps: retrieving items using multiple user-interest representations.}
    \Description{This figure illustrates the integration and interaction between the \cs, \ds, and \ps ~modules. \cs: maintaining a vector-quantized dictionary containing multiple sets of discrete interest embeddings; \ds: employing a generative model to capture the distribution of user interests at the next time step; \ps: retrieving items using multiple user-interest representations.
    }
    \label{fig: overall}
\end{figure*}


\subsection{Overview}
\subsubsection{Task Formulation}
We present the task of multi-interest recommendation.
Let \(\mathcal{U}\) and \(\mathcal{I}\) denote the user and item set. 
 Each user \(u \in \mathcal{U}\) has a interaction sequence \(\mathcal{I}_u = \{i_u^1, i_u^2, \ldots, i_u^t\}\), where \(i_u^t \in \mathcal{I}\) is the item interacted with at time \(t\).
The task aims to retrieve items based on k user-interest representations, denoted as $u_k$.
Specifically, the model calculates score \(\hat{y_k}(u, i)\) for each $u_k$:
\begin{equation}
\label{eq:multi-interest-score}
\hat{y_k}(u, i) = (\bm{u}_k)^\top \bm{v}_i,
\end{equation}
where $\bm{v}_i$ is the representation of item \(i\). In real-world applications, the above process is performed using nearest neighbor algorithms (e.g., Faiss~\cite{johnson2019billion}) to efficiently retrieve items 
 with respect to each $\bm{u}_k$. 

\subsubsection{Overall Architecture}
The overall architecture of \rec~is presented in Figure~\ref{fig: overall}, illustrating the integration and interaction between \cs,\ds, and \ps, including both training and inference.
In our framework, \cs~and \ds~ handle the interest quantization and generation, respectively, while \ps~is responsible for multi-interest recommendation. 


\subsection{\codebookm}
In this section, we provide details on how we construct an \codebook{} through interest quantization.

We implement RQ-VAE~\cite{zeghidour2021soundstream} to maintain an \codebook. 
The entire \codebook{} denoted as $\bm{E}^{*}$ can be seen as a combination of \(C\) \scbs. 
The c-th \scb~ is denoted as \(\bm{E}^{c} \in \mathbb{R}^{M_c \times d}\), where \(M_c\) represents \scb{} size, and \(d\) the dimension of each interest embedding. 
The total $C$ \scbs~form a Cartesian product to construct the entire \codebook:
\begin{equation}
\bm{E}^{*} = \prod_{c=1}^{C} \bm{E}^{c}  \in \mathbb{R}^{(\prod_{c}M_c) \times (Cd)},
\end{equation}
representing all combinations of interest from each \scb.

For an item with embedding \(\bm{v}_i\), the interest encoder \(z_\text{enc}(\cdot)\) first transforms it to a latent representation \(\bm{r}_1 = z_\text{enc}(\text{sg}[\bm{v}_i]) \in \mathbb{R}^d\), which is then mapped to the nearest interest embedding \(\bm{e}^{1}_{m^1}\) in the first \scb $\bm{E}^{1}$, corresponding to the \(m^1\)-th row of \(\bm{E}^{1}\). 
Next, we compute the residual \(\bm{r}_c = \bm{r}_{c-1} - \text{sg}[\bm{e}^{c-1}_{m^{c-1}}]\) and repeat this process for \(C\) iterations:
\begin{equation}\label{eq: interest-vec}
z_\text{quan}^c(\bm{v}) = \bm{e}^c_{m^c}, \quad
\text{where} \quad m^c = \arg \min_j \|\bm{r}_c - \bm{e}^c_j\|_2^2.
\end{equation}
Here, we leverage the stop-gradient operator \(\text{sg}[\cdot]\), ensuring that \(\bm{r}\)'s gradients update only the interest encoder \(z_\text{enc}(\cdot)\), but not the item embedding \(\bm{v}_i\) or interest embedding \(\bm{e}\).

The multi-dimensional interest embedding $z_\text{quan}(\bm{v}_i)$ is obtained by concatenating the embeddings  \(\bm{e}^c_{m^c}\) from each \scb:
\begin{equation} \label{eq: interest-vec2}
z_\text{quan}(\bm{v}_i) = \bm{e}_{m^*} = \text{concat}(\bm{e}^{1}_{m^1}, \bm{e}^{2}_{m^2}, \dots, \bm{e}^{C}_{m^C}).
\end{equation}

Subsequently,  \(z_\text{quan}(\bm{v}_i)\) is processed by a decoder $z_\text{dec}(\cdot)$ to reconstruct the input \(\bm{v}_i\).
The quantization loss consists of three terms: the reconstruction loss for the decoder, the embedding loss for the codebook, and the commitment loss for the encoder:
\begin{equation}
\begin{small}
\begin{aligned}
\mathcal{L}_{\text{\cs}} =\ \  &\|\text{sg}[\bm{v}]-z_\text{dec}(z_\text{quan}(\bm{v}))\|_2^2 \\ 
 & +\sum_{c=1}^{C} \left( \|\text{sg}[\bm{r}_c] - \bm{e}_{m^c}^{c}\|_2^2 + \beta \|\bm{r}_c - \text{sg}[\bm{e}_{m^c}^{c}]\|_2^2 \right),
\end{aligned}
\end{small}
\end{equation}
where $\beta$ is to balance the learning objectives of reconstruction and commitment for the encoder. 
The latter two terms ensure that the interest embedding aligns with the output of interest encoder.

\subsection{\distribution}
In this section, we introduce an independent next-interest prediction model decoupled from the user tower for interest generation.

In \ds{}, we utilize a user-conditioned Generative Pre-Trained Transformer (GPT) to learn users' sequential behaviors conditioned on some key user features $\bm{u}_\text{feat}$(gender, age, etc), whose effectiveness has been proven in prior researches~\cite{feng2022recommender, rajput2024recommender}. 
The behavior sequence \(\mathcal{I}_u = \{i_u^1, i_u^2, \ldots, i_u^t\}\) is first converted to item embedding sequence \(\mathcal{S}_u = \{s_u^1, s_u^2, \ldots, s_u^t\}\) through embedding lookup operation, and then combined with encoded key user features \(z_\text{feat}(\bm{u}_\text{feat})\), namely the "user condition'', to construct the input to GPT:
\begin{equation}
\mathcal{S}_u^{\prime} = \text{concat}(z_\text{feat}(\bm{u}_\text{feat}), \mathcal{S}_u).
\end{equation}

Next, \(\mathcal{S}_u^{\prime}\) is fed into multiple GPT blocks, and the output at the last position of the final GPT block, denoted as \(z_\text{GPT}(\mathcal{S}_u^{\prime})\), is further passed through a fully connected layer \(z_\text{FC}\) to adapt it for the task of interest generation. 
Thus, a probability distribution over multi-dimensional interests in \cs{} is obtained:
\begin{equation}
\hat{\bm{p}_u} = \text{softmax}(z_\text{FC}(z_\text{GPT}(\mathcal{S}_u^{\prime}))),
\end{equation}
where \(z_\text{FC}(z_\text{GPT}(\mathcal{S}_u^{\prime})) \in \mathbb{R}^{\prod_{c}M_c}\) represents the predicted logits for all multi-dimensional interests.

To obtain the multi-dimensional interest label at the next time step, we pass the ground truth next time item \(i_u^{t+1}\) through \cs{} described in Equation~\ref{eq: interest-vec}. 
Assume that \(i_u^{t+1}\) belongs to the \(m^{1}\)-th interest in the first \scb, the \(m^{2}\)-th in the second, and so on. 
Then the corresponding multi-dimensional interest index \(m^*\) in the entire \codebook{} is computed as:
\begin{align} \label{m-transfer}
m^* = \sum_{c=1}^{C} \left( m^{c} \cdot \prod_{j=1}^{c-1} M_j \right),
\end{align}
where \(M_j\) is the total number of the interests in the \scb~ \(\bm{E}^{j}\).
The corresponding one-hot label \( \bm{p}_u \in \{0, 1\}^{\prod_{c}M_c}\) is constructed by setting the \(m^*\)-th index to \(1\), while others to \(0\).

Finally, the \ds{} loss is formulated via cross-entropy:
\begin{equation}
\mathcal{L}_{\text{\ds}} = - \bm{p}_u^\top \log \hat{\bm{p}_u}.
\end{equation}

\subsection{\prediction}
In this section, we describe how \ps{}
retrieve items using the interest embedding \(\bm{e}_{m^*}\), user embedding \(\bm{u}\) from the user tower, and item embedding \(\bm{v}_i\) from the item tower.

During training, the interest index $m^*$ is extracted from the ground truth next item according to Equation~\ref{eq: interest-vec}. During inference, it is sampled from the posterior distribution \(\hat{\bm{p}_u}\) generated by \ds. 
Once $m^*$ is obtained, the interest embedding \(\bm{e}_{m^*}\) can be retrieved from the \codebook~ according to Equation~\ref{eq: interest-vec2}.

We combine the interest embedding \(\bm{e}_{m^*}\) with the user embedding \(\bm{u}\) and pass the result through a fusion network \(z_\text{fusion}(\cdot)\).
This fused user-interest representation is then used to compute the preference score by a dot product with the item embedding \(\bm{v}_i\):
\begin{equation}
\hat{y}(\bm{e}_{m^*}, \bm{u}, \bm{v}_i) = z_\text{fusion}(\text{concat}(\bm{e}_{m^*}, \bm{u}))^\top \bm{v_i}.
\end{equation}

For training the retrieval module on interest-user-item tuples \((\bm{e}_{m^*}, \bm{u}, \bm{v}_i)\), we adopt the classical in-batch negative sampling~\cite{yi2019sampling} to select negative items, denoted as:
\begin{equation}
\mathcal{N}_i^- = \mathcal{B} \backslash \{i\},
\end{equation}
where \(\mathcal{B}\) includes all items in the current batch.

Meanwhile, to distinguish between the ground truth interest \(m^*\) and non-relevant interests,  negative interests are selected as: 
\begin{equation}
\mathcal{N}_m^- = (\tilde{\mathcal{M}} \cup \overline{\mathcal{M}})\backslash \{m^*\},
\end{equation}
where \(\tilde{\mathcal{M}}\) denotes some top interests from \(\hat{\bm{p}_u}\) predicted by \ds{}, serving as hard negatives,
and \(\overline{\mathcal{M}}\) are easy negative interests randomly sampled from the entire \codebook{}.

Finally, we apply the softmax loss to distinguish the positive sample \((\bm{e}_{m^*}, \bm{u}, \bm{v}_i)\) from negative items and negative interests:
\begin{equation}
\footnotesize
\mathcal{L}_{\ps} = -\log \left( \frac{e^{\hat{y}(\bm{e}_{m^*}, \bm{u}, \bm{v}_i)}}{e^{\hat{y}(\bm{e}_{m^*}, \bm{u}, \bm{v}_i)} + \sum_{i^\text{-} \in \mathcal{N}_i^-} e^{\hat{y}(\bm{e}_{m^*}, \bm{u}, \bm{v}_{i^\text{-}})} + \sum_{m^\text{-} \in \mathcal{N}_m^-} e^{\hat{y}(\bm{e}_{m^\text{-}}, \bm{u}, \bm{v}_i)}} \right)
\end{equation}

\subsection{Optimization}
\subsubsection{Training Objectives}
The overall loss function for training \rec~ is a weighted sum of losses from three modules:
\begin{equation}
\mathcal{L}_{\text{total}} = \lambda_1 \mathcal{L}_{\text{\cs}} + \lambda_2 \mathcal{L}_{\text{\ds}} + \lambda_3 \mathcal{L}_{\text{\ps}},
\end{equation}
where $\lambda_1$, $\lambda_2$ and $\lambda_3$ are hyper-parameters. 

\subsubsection{Joint \codebook{} Updates}
\label{Interest Dictionary Updates}
\paragraph{Update Rule}
We apply a joint training for the \codebook{} through \cs~and \ps{} to better align its semantic separation with downstream recommendation,
with the following update rule:
\begin{equation}
\bm{E}_{\text{(new)}} = \bm{E}_{\text{(old)}} - \eta (\lambda_1 \frac{\partial L_{\cs}}{\partial \bm{E}} + \lambda_3 \frac{\partial L_{\ps}}{\partial \bm{E}})
\end{equation}
where \( \frac{\partial (\cdot)}{\partial (\cdot)} \) represents the gradient, and \(\eta\) is the learning rate.

\paragraph{Training Strategy}
To ensure training stability of simultaneous updates, we adopt a three-stage strategy.
First, we train \ps~ independently while keeping \cs~ frozen and the \codebook{} unchanged.
Next, we train \cs~separately until the \codebook{} converges.
Finally, both \ps~and \cs~ are trained jointly, allowing simultaneous updates to the \codebook.

\paragraph{Initialization}
To improve the utilization of \codebook{}, following work~\cite{zeghidour2021soundstream}, we adopt a clustering-based initialization for each \scb{}, performing k-means clustering on the first training batch, and setting the resulting centroids as initial interest embeddings.
Additionally, the first \scbs~is initialized via preset prior categories (e.g., Music, Food, Education, etc.) to accelerate convergence.

Algorithm~\ref{alg:training} depicts the training phase of \rec.

\subsection{Online Serving}
\subsubsection{User Top-$K$ Interest Cache}\label{sec: user-topk}
We design a user interest cache (Figure~\ref{fig: overall}) to support efficient online serving.
This cache stores quantized top-$K$ interest indices generated by \ds~ with shape $K \times C$ for each user during online streaming training, allowing \cs{} and \ds{} to employ models of arbitrary design without increasing online inference latency. 

To provide flexibility for online diversity, we introduce the \textit{controllable aggregation} to control the diversity among the top-$K$ selected interests \(m_{1}^*, \cdots, m_{K}^*\) for each user. 
Specifically, for each \(m_{k}^{*}\), its corresponding index in each \scb{} is denoted as \(m_{k}^1, \cdots, m_{k}^C\).  
In this way, the number of times each \(m_{k}^c\) appears in the top-$K$  cache for an individual user is limited to a threshold \( \varepsilon \in \mathbb{Z}^{+}\). 
Formally, we enforce the following constraint:
\begin{equation}
\forall k \in [1, K], \ c \in [1, C], \quad \text{count}(m_k^c, \{m_1^c, \dots, m_K^c\}) \leq \varepsilon.
\end{equation}

Overall, each user's cached top-$K$ interest indices are updated in real-time, employing the above techniques.

\subsubsection{Online Recommendation}
The $K \times C$ interest indices \(m_{1,\cdots, K}^{1,\cdots, C}\) stored in the user top-$K$ cache are used to look up the corresponding interest embeddings \(\bm{e}_{m_{1, \cdots, K}^*}\) in the \codebook{}. 
These interest embeddings are fed into \ps{}, where each interest generates a corresponding user-interest representation \(\bm{u}_k\), based on the user tower output \(\bm{u}\).
After obtaining multiple user-interest representations \(\bm{u}_{1,\cdots,K}\), we request the Approximate Nearest Neighbor (ANN) service for the final recommendations.

Algorithm~\ref{alg:inference} details the inference phase of \rec.

\begin{algorithm}[t]
    \caption{Streaming Training of \rec}  
    \label{alg:training}  
    \begin{algorithmic}[1]
        \State \textbf{Initialize} parameters for \cs, \ds, and \ps.
        \State \textbf{Train} \ps~separately while fixing the \codebook.
        \State \textbf{Train} \cs~separately and optimize the \codebook.
        \While {system is running}  \Comment{Online Learning}
            \State Observe: Real-time user interaction \((u, i)\).
            \State Obtain the quantified interest \(m^*\) for item i through \cs.
            \State Sample negative items \( \mathcal{N}^-_{i} \) and negative interests \( \mathcal{N}^-_{m} \).
            \State \textbf{Update} \cs~ using loss \( \mathcal{L}_{\cs}(i) \).
            \State \textbf{Update} \ds~ using loss \( \mathcal{L}_{\ds}(u, m^*) \).
            \State \textbf{Update} the top-K interest cache for user u.
            \State \textbf{Update} \ps~ using loss \( \mathcal{L}_{\ps}(u, i, m^*, \mathcal{N}^-_{i}, \mathcal{N}^-_{m}) \).
        \EndWhile
    \end{algorithmic}  
\end{algorithm}

\begin{algorithm}[t]
    \caption{Inference of \rec}  
    \label{alg:inference}
    \begin{algorithmic}[1]
        \Require User embedding \( \bm{u} \), cached user interest indices \(m_{1,\cdots, K}^{1,\cdots, C}\).
        \Require Item embeddings \( \bm{v}_i \) for all candidate items.
        \Ensure Recommendation list of \( N \) items.
        
        \State Fetch interest embeddings \( \bm{e}_{m^*_1, \cdots, m^*_K} \) from \codebook:
        \[
            \bm{e}_{m^*_k} = \text{concat}(\bm{e}^{1}_{m^1_k}, \bm{e}^{2}_{m^2_k}, \dots, \bm{e}^{C}_{m^C_k}).
        \]
        \State Compute fused representation \( \bm{u}_{1, \dots, K} \):
        \[
        \bm{u}_k = z_{\text{fusion}}(\text{concat}(\bm{e}_{m_k^*}, \bm{u})), \quad k = 1, 2, \dots, K.
        \]
        \State Retrieve items with respect to each $\bm{u}_k$:
        \[
            \hat{y}_k(\bm{v}_i) = (\bm{u}_k)^\top \bm{v}_i.
        \]
        \State \Return Retrieved Top-\( N \) recommendation set.
    \end{algorithmic}  
\end{algorithm}

\section{Theoretical Analysis} \label{sec: Theoretical Analysis}
In this section, we provide analyses for the discussion in section~\ref{sec: introduction}. We first present the proposition that the interest quantization induces a Voronoi partition~\cite{de2008computational} (Proposition \ref{proposition: Induce}). Then it follows that the quantization offers a non-trivial lower bound of separation (Corollary \ref{corollary: Amplified Separation}) which can propagate to the retrieval space under mild conditions (Proposition \ref{proposition:retrieval separability}), while regularization offers no such lower-bound guarantee (Proposition \ref{proposition: Regularization}).
\begin{proposition}[Interest Quantization Induces a Voronoi Partition]\label{proposition: Induce}
The \codebook{} $E^*=\{e_1,\dots,e_{|E^*|}\}\subset\mathbb{R}^{Cd}$ induces the Voronoi partition.
\end{proposition}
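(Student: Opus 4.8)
The plan is to treat $E^*$ as a finite site set in $\mathbb{R}^{Cd}$, verify the two defining properties of a Voronoi partition directly, and then identify the resulting cells with the decision regions of the quantizer $z_\text{quan}$. First I would note that $|E^*|=\prod_c M_c<\infty$ by construction, so $E^*$ is a nonempty finite subset of $\mathbb{R}^{Cd}$. For each $e_j\in E^*$, I define the candidate cell $V_j=\{x\in\mathbb{R}^{Cd}:\|x-e_j\|_2\le\|x-e_k\|_2 \text{ for all } e_k\in E^*\}$, and the goal is to show $\{V_j\}_j$ is exactly the Voronoi partition induced by $E^*$.

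Next I would establish the two partition axioms. For coverage, since $E^*$ is finite the map $e\mapsto\|x-e\|_2$ attains its minimum over $E^*$ for every $x$, so $x\in V_j$ for at least one $j$ and $\bigcup_j V_j=\mathbb{R}^{Cd}$. For interior-disjointness, if $x$ lay in the interior of both $V_j$ and $V_k$ with $j\ne k$, then $\|x-e_j\|_2<\|x-e_k\|_2$ and simultaneously $\|x-e_k\|_2<\|x-e_j\|_2$, a contradiction; hence distinct cells meet only on the bisecting hyperplanes $H_{jk}=\{x:\|x-e_j\|_2=\|x-e_k\|_2\}$, each an affine hyperplane of Lebesgue measure zero. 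These are precisely the conditions in the standard definition~\cite{de2008computational}, so $E^*$ induces a Voronoi partition, and the quantizer $z_\text{quan}$ (nearest-neighbor assignment) has decision region exactly $V_j$ up to the measure-zero ties on the $H_{jk}$.

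To connect the partition to the product structure of \cs{}, I would use the concatenation $e_j=\text{concat}(e^1_{j_1},\dots,e^C_{j_C})$: writing $x=\text{concat}(x^1,\dots,x^C)$, the squared distance splits additively as $\|x-e_j\|_2^2=\sum_{c=1}^C\|x^c-e^c_{j_c}\|_2^2$. This additivity shows each cell factorizes as a Cartesian product $V_j=\prod_{c=1}^C V^c_{j_c}$ of the per-\scb{} Voronoi cells, which is the structural fact I expect Corollary~\ref{corollary: Amplified Separation} to exploit when lower-bounding inter-interest separation.

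The main obstacle I anticipate is reconciling this global nearest-neighbor description with the greedy, residual quantization actually employed in \cs{}, which selects each index sequentially from the residual $\bm{r}_c$ rather than jointly minimizing distance on the product codebook. The crux is to argue that the additive block decomposition above makes each per-level minimization coincide with the global $\arg\min$ over $E^*$, so that the greedy assignment recovers the true Voronoi cell rather than a suboptimal one; I would dispose of equidistant latents by fixing an arbitrary tie-breaking rule, which perturbs only the measure-zero bisector set and leaves the partition well defined.
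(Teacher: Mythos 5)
Your first two paragraphs are correct and coincide with the paper's own argument: the paper likewise fixes a finite codebook, defines the nearest-neighbor quantizer $q(x)\in\arg\min_{e\in E^*}\|x-e\|_2$, and verifies covering, convexity of cells as finite intersections of bisector halfspaces, disjointness up to a measure-zero union of bisectors, and almost-everywhere nearest-neighbor consistency (the paper works in $\mathbb{R}^{d}$ ``for convenience'' rather than $\mathbb{R}^{Cd}$; that difference is immaterial). Had you stopped there, your proof would be complete in exactly the paper's sense.

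The last two paragraphs, however, rest on a misreading of the quantizer, and the step you yourself call the crux would fail. \cs{} implements \emph{residual} quantization (RQ-VAE), not \emph{product} quantization: the encoder latent $\bm{r}_1 = z_\text{enc}(\text{sg}[\bm{v}_i])$ lies in $\mathbb{R}^{d}$, every level quantizes a residual $\bm{r}_c = \bm{r}_{c-1} - \text{sg}[\bm{e}^{c-1}_{m^{c-1}}]$ in that same $d$-dimensional space (Equation~\ref{eq: interest-vec}), and the concatenation in Equation~\ref{eq: interest-vec2} assembles the \emph{output} embedding, not the input being quantized. There is therefore no block decomposition $x=\text{concat}(x^1,\dots,x^C)$ of the quantized datum; the additive split $\|x-e_j\|_2^2=\sum_c\|x^c-e^c_{j_c}\|_2^2$ and the factorization $V_j=\prod_c V^c_{j_c}$ describe a product quantizer, not this scheme. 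More importantly, your claimed resolution---that each per-level greedy minimization coincides with the global $\arg\min$ over $E^*$---is false for residual quantization: the residual at level $c$ depends on the codes already chosen at earlier levels, and greedy RQ is well known to be suboptimal relative to jointly minimizing $\|\bm{r}_1-\sum_c \bm{e}^c_{j_c}\|_2$ over all index tuples (this is precisely why beam search is used in the RQ literature). Read your way---``the decision regions of the greedy residual quantizer are the Voronoi cells of the composite codebook''---the proposition would be false. The paper avoids this entirely: it proves only the generic statement that a finite codebook under nearest-neighbor assignment induces a Voronoi partition, which applies to each quantization step of Equation~\ref{eq: interest-vec} separately, and it claims no greedy-versus-global equivalence.
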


\begin{proof}
See Appendix \ref{appendix: proof} for details.
\end{proof}

\begin{corollary}[Structural Separation Induced by Discrete Indices]
If two data points are quantized into different Voronoi cells $V_{m}$ and $V_{n}$, then since the \codebook{} $E$ is finite, there exists a strictly positive minimum distance
\[
\Delta_{\min}=\min_{m\neq n}\|\mathbf{e}_{m}-\mathbf{e}_{n}\|_2 > 0,
\]
which provides a non-trivial lower bound of separation between any two distinct Voronoi cells in the codebook space. \qed
\end{corollary}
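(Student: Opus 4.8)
The plan is to reduce the statement to the elementary fact that the minimum of finitely many strictly positive numbers is itself strictly positive, and then to read off the claimed uniformity. First I would record that the \codebook{} $E^*$ is a finite set, so the collection of unordered index pairs $\{m,n\}$ with $m\neq n$ has finite cardinality $\binom{|E^*|}{2}$. Consequently $\Delta_{\min}=\min_{m\neq n}\|\mathbf{e}_m-\mathbf{e}_n\|_2$ is a minimum over a finite set, hence attained by some pair rather than being a mere infimum.

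The substantive step, and the only place I expect any real obstacle, is to verify that every term entering this minimum is strictly positive, i.e.\ that $\mathbf{e}_m\neq\mathbf{e}_n$ whenever $m\neq n$. Here I would invoke Proposition~\ref{proposition: Induce}: the codebook induces a genuine Voronoi partition, so two indices yield distinct cells $V_m\neq V_n$ exactly because their generators differ. Were two codewords to coincide, they would define one and the same cell, contradicting the hypothesis that the two data points land in \emph{distinct} cells. Thus each admissible pairwise distance is strictly positive, and the minimum of finitely many strictly positive reals gives $\Delta_{\min}>0$. If one prefers not to lean on Proposition~\ref{proposition: Induce}, distinctness of the $\mathbf{e}_m$ can instead be folded into the standing assumptions on $E^*$, a mild and standard requirement for a well-posed quantizer.

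Finally I would translate this into the separation claim. If two data points are quantized to different cells $V_m$ and $V_n$, then their representatives in the codebook space are $\mathbf{e}_m$ and $\mathbf{e}_n$, so the previous step yields $\|\mathbf{e}_m-\mathbf{e}_n\|_2\geq\Delta_{\min}>0$. The bound is non-trivial precisely because $\Delta_{\min}$ depends only on the fixed, finite codebook and not on the chosen data points; it therefore holds uniformly over all pairs assigned to distinct cells, which is exactly the structural lower bound asserted.
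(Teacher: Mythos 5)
Your proposal is correct and takes essentially the same route as the paper: the corollary is stated there as immediate, the entire argument being that a finite codebook yields finitely many pairwise distances whose minimum is attained and strictly positive. Your extra step securing distinctness of the codewords (via Proposition~\ref{proposition: Induce}, or as a standing assumption) merely makes explicit what the paper's set notation $E^*=\{e_1,\dots,e_{|E^*|}\}$ already presupposes.
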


\begin{corollary}[Amplified Separation in \codebook{}]\label{corollary: Amplified Separation}
Let $\delta_c$ denote the minimum pairwise distance in \scb{} $E^{c}$.  
If two interest embeddings $\mathbf{e}_{m},\mathbf{e}_{n}\in E^*$ differ in $h$ indices, then
\[
\|\mathbf{e}_{m}-\mathbf{e}_{n}\|_2^2 \;\ge\; \sum_{c} \delta_c^2 \;\ge\; C \cdot \Big(\min_c \delta_c\Big)^2,
\]
Thus, the separation between distinct cells increases with the Hamming distance between their index tuples. \qed
\end{corollary}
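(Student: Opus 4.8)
The plan is to exploit the product structure of the \codebook{}. By construction (Equation~\ref{eq: interest-vec2}), every interest embedding $\mathbf{e}_m \in E^*$ is the concatenation of one codeword drawn from each \scb{}; writing the index tuple of $\mathbf{e}_m$ as $(m^1,\dots,m^C)$ and that of $\mathbf{e}_n$ as $(n^1,\dots,n^C)$, the two vectors are $\mathbf{e}_m = \text{concat}(\mathbf{e}^1_{m^1},\dots,\mathbf{e}^C_{m^C})$ and $\mathbf{e}_n = \text{concat}(\mathbf{e}^1_{n^1},\dots,\mathbf{e}^C_{n^C})$, each living in $\mathbb{R}^{Cd}$ and partitioned into $C$ aligned blocks of dimension $d$. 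First I would invoke the fact that the squared Euclidean norm of a concatenated vector equals the sum of the squared norms of its blocks, which holds precisely because distinct blocks occupy orthogonal coordinate subspaces and therefore generate no cross terms.

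Concretely, that first step yields the block decomposition
\[
\|\mathbf{e}_m - \mathbf{e}_n\|_2^2 = \sum_{c=1}^{C} \|\mathbf{e}^c_{m^c} - \mathbf{e}^c_{n^c}\|_2^2 .
\]
Next I would split this sum according to whether the two tuples agree or differ in block $c$. For every block with $m^c = n^c$ the codewords coincide and the term vanishes, so only the $h$ blocks in the differing set $S=\{c : m^c \neq n^c\}$ survive. For each such block, $\mathbf{e}^c_{m^c}$ and $\mathbf{e}^c_{n^c}$ are two \emph{distinct} codewords of the same \scb{} $E^c$, so by the definition of $\delta_c$ as the minimum pairwise distance within $E^c$ we obtain $\|\mathbf{e}^c_{m^c} - \mathbf{e}^c_{n^c}\|_2^2 \ge \delta_c^2$. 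Summing over $S$ gives $\|\mathbf{e}_m-\mathbf{e}_n\|_2^2 \ge \sum_{c\in S}\delta_c^2$, and bounding each surviving term below by $(\min_c \delta_c)^2$ produces $\sum_{c\in S}\delta_c^2 \ge h\,(\min_c \delta_c)^2$; the chain displayed in the corollary, with $C$ in place of $h$, is the extreme case $h=C$ in which the tuples disagree in every coordinate, and more generally the bound increases monotonically in the Hamming distance $h$, which is exactly the concluding claim.

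The two ingredients here, the block decomposition and the per-block minimum-distance bound, are immediate, so the only point requiring genuine care is the bookkeeping over the index set $S$ and, relatedly, the reconciliation between the ``$h$ differing indices'' hypothesis and the $\sum_c$/$C$ written in the displayed inequality. I would therefore record the sharp form $\sum_{c\in S}\delta_c^2 \ge h\,(\min_c\delta_c)^2$ explicitly and remark that the version stated in the corollary is this bound with the sum restricted to the differing blocks (equivalently, the worst case $h=C$). No appeal to the Voronoi structure of Proposition~\ref{proposition: Induce} is needed beyond its guarantee that two quantized points fall into genuinely distinct cells, which ensures $h\ge 1$ and hence that the lower bound is strictly positive.
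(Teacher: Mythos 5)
Your proof is correct, and it is the natural block-decomposition argument: the paper itself offers no written proof of this corollary (it is asserted with a \qed{} as an immediate consequence of the Cartesian-product construction), so your argument is exactly the one the paper implicitly relies on. Beyond that, you have done something genuinely useful: you correctly observe that the corollary as displayed is imprecise. Under the hypothesis that $\mathbf{e}_m$ and $\mathbf{e}_n$ differ in $h$ indices, the valid bound is
\[
\|\mathbf{e}_m-\mathbf{e}_n\|_2^2 \;=\; \sum_{c=1}^{C}\|\mathbf{e}^c_{m^c}-\mathbf{e}^c_{n^c}\|_2^2 \;\ge\; \sum_{c\in S}\delta_c^2 \;\ge\; h\,\Big(\min_c \delta_c\Big)^2,
\qquad S=\{c: m^c\neq n^c\},
\]
whereas the paper's chain $\sum_{c}\delta_c^2 \ge C(\min_c\delta_c)^2$ (sum over all $C$ blocks, factor $C$) holds only in the extreme case $h=C$; for $h<C$ the blocks with $m^c=n^c$ contribute zero, not $\delta_c^2$. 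Your sharp form with $h$ is also the version that actually supports the corollary's concluding sentence --- that separation grows with the Hamming distance --- which the paper's $C$-version does not express. The only point I would tighten is your final remark: distinctness of $\mathbf{e}_m,\mathbf{e}_n\in E^*$ already forces $h\ge 1$ by the product construction alone; no appeal to the Voronoi proposition is needed even for that.
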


\begin{proposition}[From \codebook{} separability to retrieval separability]
\label{proposition:retrieval separability}
Assume the fusion function $z_{\text{fusion}}(u, e)$ satisfies local
Lipschitz-type condition in its second argument. 
If two interests are separated by at least $\Delta$, then there exists a constant $0<\alpha \le \infty$ such that their fused retrieval vectors satisfy
\[
\|z_{\text{fusion}}(u,e_m) - z_{\text{fusion}}(u,e_n)\|_2 \;\ge\; \alpha \,\Delta .
\]
In cosine-similarity maximum inner product search (MIPS), this further implies a strictly 
positive score-margin lower bound, which in turn upper-bounds the overlap between the 
Top-$N$ candidate sets of the two interests.
\end{proposition}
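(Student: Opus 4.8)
The plan is to split the statement into its two assertions---the fused-vector distance bound and the Top-$N$ overlap bound---and establish them in sequence. I would first make the ``Lipschitz-type condition'' precise: since the conclusion is a \emph{lower} bound while an ordinary Lipschitz property yields only an upper bound, the hypothesis must be read as a local co-Lipschitz (lower-Lipschitz) condition in the second argument, i.e. there is $\alpha>0$ with $\|z_{\text{fusion}}(u,e_m)-z_{\text{fusion}}(u,e_n)\|_2 \ge \alpha\,\|e_m-e_n\|_2$ for the relevant $e_m,e_n$ at fixed $u$. This encodes that the fusion map does not collapse the dictionary geometry. Granting this, the first inequality is immediate: combining the co-Lipschitz bound with $\|e_m-e_n\|_2\ge\Delta$, where $\Delta>0$ is the dictionary separation guaranteed by Corollary~\ref{corollary: Amplified Separation}, yields $\|z_{\text{fusion}}(u,e_m)-z_{\text{fusion}}(u,e_n)\|_2\ge\alpha\Delta$ in one step.

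The second half is the substantive part. I would pass to the cosine-MIPS geometry by normalising the two fused queries to $\hat{u}_m,\hat{u}_n$ on the unit sphere and, under a uniform lower bound on $\|z_{\text{fusion}}(u,e_m)\|,\|z_{\text{fusion}}(u,e_n)\|$, convert the Euclidean separation $\alpha\Delta$ into an angular separation: using $\|\hat{u}_m-\hat{u}_n\|_2^2=2\big(1-\cos\angle(\hat{u}_m,\hat{u}_n)\big)$, the distance lower bound gives $\angle(\hat{u}_m,\hat{u}_n)\ge\gamma$ for some $\gamma=\gamma(\alpha\Delta)>0$. The score margin then follows since, for any unit-norm item $v$, the two cosine scores are $\hat{u}_m^\top v$ and $\hat{u}_n^\top v$, and $v$ cannot be near-aligned with both queries once they are angularly far apart.

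To turn this into the overlap bound I would use that angular distance on the sphere is a genuine metric and hence obeys the triangle inequality. Writing $T_N(m),T_N(n)$ for the two Top-$N$ sets, if $v\in T_N(m)\cap T_N(n)$ then its angles $\theta_m=\angle(\hat{u}_m,v)$ and $\theta_n=\angle(\hat{u}_n,v)$ are each at most the angular radius $\rho_N$ fixed by the $N$-th largest score (the admission threshold), whence $\gamma\le\angle(\hat{u}_m,\hat{u}_n)\le\theta_m+\theta_n\le 2\rho_N$. Thus every common item is confined to the lens $\{v:\theta_m+\theta_n\ge\gamma,\ \theta_m,\theta_n\le\rho_N\}$, and under a mild regularity (e.g. bounded-density) assumption on the item-embedding distribution the cardinality of this region---hence $|T_N(m)\cap T_N(n)|$---is bounded above by a quantity that decreases as $\gamma$, and therefore as $\alpha\Delta$, grows.

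I expect the main obstacle to be this final quantitative overlap estimate rather than the distance bound. The inequality $\theta_m+\theta_n\ge\gamma$ is clean, but converting it into an explicit upper bound on the \emph{number} of overlapping items requires a geometric or distributional assumption (a density or covering bound on item embeddings, or a fixed Top-$N$ threshold), and the resulting constant will depend on that assumption together with $N$ and $\gamma$; I would state this dependence explicitly rather than claim a universal constant. A secondary subtlety is the normalisation step: I must ensure the passage from the unnormalised Euclidean bound $\alpha\Delta$ to an angular bound relies only on a uniform lower bound on the query norms, so that the cosine separation is not washed out by differing magnitudes of the fused vectors.
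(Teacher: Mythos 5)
Your proposal is correct and, for the core inequality, identical to the paper's proof: the paper likewise reads the ``Lipschitz-type'' hypothesis as a lower (co-)Lipschitz bound $\|z_{\text{fusion}}(u,e_m)-z_{\text{fusion}}(u,e_n)\|_2 \ge \alpha\,\|e_m-e_n\|_2$ in the second argument and combines it with $\|e_m-e_n\|_2\ge\Delta$ in a single step. Where you genuinely diverge is the MIPS claim: the paper disposes of it in one sentence---if outputs are $\ell_2$-normalized, the separation ``translates into a strictly positive margin in cosine similarity,'' and the Top-$N$ overlap ``is strictly upper-bounded''---with no supporting argument, plus a closing remark (not needed for the logic) that the co-Lipschitz condition is attainable under common architectures such as LeakyReLU with spectral-norm constraints. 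Your angular-metric development, converting the Euclidean gap $\alpha\Delta$ into an angular gap $\gamma$ via $\|\hat u_m-\hat u_n\|_2^2 = 2\bigl(1-\cos\angle(\hat u_m,\hat u_n)\bigr)$ and then using the triangle inequality to confine any common item to the lens $\theta_m+\theta_n\ge\gamma$, $\theta_m,\theta_n\le\rho_N$, supplies exactly the content the paper omits; and your caveat is well placed: without a density or covering assumption on the item embeddings, the inequality $\gamma\le 2\rho_N$ yields only a geometric dichotomy (loose admission thresholds or a thin overlap lens), not a bound on the \emph{number} of shared items, so the paper's unconditional assertion really holds only under such regularity. A secondary difference in bookkeeping also favors your version: you replace the paper's assumption of exactly $\ell_2$-normalized fusion outputs by a uniform lower bound on query norms followed by normalization, which is the cleaner hypothesis, since exact normalization sits slightly awkwardly with a co-Lipschitz lower bound (the guaranteed gap $\alpha\Delta$ can then never exceed the diameter of the unit sphere). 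In short: same key step for the first assertion, and for the second assertion your argument is a strict refinement of what the paper states without proof.
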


\begin{proof}
See Appendix \ref{appendix: proof} for details.
\end{proof}

\begin{proposition}
(Regularization does not imply structural separation.) \label{proposition: Regularization} 
For any finite regularization weight $\lambda$ and any continuous penalty $\mathcal{R}$, there does not exist a 
data-independent constant $\Delta > 0$ such that all global minimizers satisfy
\[
\min_{k \neq \ell}\|\mathbf{u}_k - \mathbf{u}_\ell\|_2 \;\ge\; \Delta.
\]
\end{proposition}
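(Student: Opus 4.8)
The plan is to refute the existential claim directly by an adversarial construction of the data. I fix an arbitrary finite weight $\lambda$ and an arbitrary continuous penalty $\mathcal{R}$, and then, for every candidate threshold $\Delta>0$, I exhibit a dataset whose \emph{global} minimizer of the regularized objective $\mathcal{L}_{\text{total}}(\Theta)=\mathcal{L}_{\text{pred}}(\Theta)+\lambda\,\mathcal{R}(\mathbf{u}_1,\dots,\mathbf{u}_K)$ has minimum pairwise separation strictly below $\Delta$. Since the proposition requires $\Delta$ to be \emph{data-independent}, producing such a dataset for every $\Delta$ rules out any uniform lower bound, which is exactly the negation to be proved.

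First I would isolate a single ``collapsing'' data pattern. I take a training set concentrated on one (user, target-item) interaction, repeated or up-weighted $t$ times, under the standard multi-interest prediction rule in which every interest representation is pulled toward the single target (equivalently, a label-aware aggregation over the $\mathbf{u}_k$). Writing the resulting prediction loss as $t\,\ell_0(\Theta)$, the function $\ell_0$ is continuous, coercive, and attains its minimum \emph{only} at the fully collapsed configuration $\Theta^{c}$ in which $\mathbf{u}_1=\cdots=\mathbf{u}_K$, so that $S(\Theta^{c}):=\min_{k\neq\ell}\|\mathbf{u}_k-\mathbf{u}_\ell\|_2=0$. Crucially, $\mathcal{R}$ is data-independent and therefore does not change with $t$.

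Next I would run a scaling argument on the minimizer $\Theta_t^{*}$ of $L_t(\Theta)=t\,\ell_0(\Theta)+\lambda\,\mathcal{R}(\Theta)$. Comparing $\Theta_t^{*}$ with $\Theta^{c}$ yields $\ell_0(\Theta_t^{*})\le \ell_0(\Theta^{c})+\tfrac{\lambda}{t}\bigl(\mathcal{R}(\Theta^{c})-\mathcal{R}(\Theta_t^{*})\bigr)$. Because $\mathcal{R}$ is continuous it is finite at $\Theta^{c}$, and coercivity of $t\,\ell_0$ confines every minimizer to a fixed compact region on which $\mathcal{R}$ is bounded; hence the right-hand side tends to $\ell_0(\Theta^{c})$ as $t\to\infty$. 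Uniqueness of the minimizer of $\ell_0$ together with continuity of $S$ then forces $S(\Theta_t^{*})\to 0$, so choosing $t$ large enough gives $S(\Theta_t^{*})<\Delta$ for the prescribed $\Delta$.

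The step I expect to be the main obstacle is the uniformity over \emph{all} continuous penalties $\mathcal{R}$: a priori one might fear a penalty acting as an infinite barrier against collapse. The resolution I would emphasize is that continuity (finiteness at $\Theta^{c}$) together with the \emph{finite} weight $\lambda$ preclude any such barrier, since the fixed contribution $\lambda\,\mathcal{R}$ is always overwhelmed by the unbounded data signal $t\,\ell_0$. This is precisely the structural contrast with the quantization guarantee $\Delta_{\min}>0$ underlying Corollary~\ref{corollary: Amplified Separation}, where separation stems from the finite \codebook{} geometry rather than a loss trade-off. A secondary technicality is verifying that $\ell_0$ genuinely admits a \emph{unique} collapsed minimizer under the chosen aggregation; I would discharge this by selecting the degenerate single-interest data pattern so that all retrieval scores are simultaneously maximized only at $\Theta^{c}$.
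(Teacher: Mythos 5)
Your construction is essentially a cleaner abstraction of the mechanism the paper relies on most heavily, but the two proofs are organized differently. The paper argues by exhibiting \emph{three} counterexamples covering distinct failure modes: (i) scale invariance of dot-product scores ($\langle \mathbf{u},\mathbf{v}\rangle = \langle c\mathbf{u},\tfrac{1}{c}\mathbf{v}\rangle$), so user vectors can shrink together at zero task cost; (ii) a dominant-modality vMF mixture on the sphere, where imposing any angular separation $\theta$ costs at least $Tp\,\delta(\theta,\kappa_1)$ in task loss over $T$ samples while the regularization gain is capped by $\lambda\binom{K}{2}R(2)$, so the collapsed solution globally wins for large $T$; and (iii) null-space freedom when items lie in a proper subspace, so the orthogonal components of the $\mathbf{u}_k$ can be collapsed with no effect on any score. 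Your single scaling argument is the analogue of (ii): you replace the vMF mixture by a fully degenerate one-cluster dataset and replace the explicit pairwise bound $\lambda\binom{K}{2}R(2)$ by a compactness argument. That substitution is genuinely more general in one respect --- the paper's bound tacitly assumes a sum of pairwise penalties with bounded argument on the sphere, whereas continuity-plus-compactness handles an arbitrary continuous $\mathcal{R}$. What you give up are the invariance-based counterexamples (i) and (iii), which show that collapse can be \emph{exactly} loss-neutral with no asymptotics in the data size at all; that is a structurally stronger failure mode than a trade-off that only dominates as $t\to\infty$, and it covers penalties/settings your argument does not reach.

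Two technical debts in your write-up are worth naming, though neither is fatal. First, the caveat you flag yourself --- that $\ell_0$ must be minimized \emph{only} at collapse, which fails under pure max-aggregation since inactive interests are unconstrained --- is real, but it is equally an unstated assumption of the paper's counterexample (ii) (the claim that moving one $\hat{u}_j$ off the dominant direction raises the loss presupposes every interest's score enters the objective); so this is a shared modeling choice, not a gap specific to you. Second, your compactness step is mildly circular as stated: you need minimizers confined to a compact set to bound $\mathcal{R}(\Theta_t^*)$, but you invoke a bound involving $\mathcal{R}(\Theta_t^*)$ to confine them. Repair it by taking $\ell_0\ge 0$ with $\ell_0(\Theta^c)=0$, so that $L_t(\Theta_t^*)\le \lambda\mathcal{R}(\Theta^c)$ uniformly in $t$, and then trapping all minimizers in a sublevel set of a $t$-independent coercive minorant (e.g.\ $\ell_0+\lambda\mathcal{R}$ when coercive), or simply work on the unit sphere as the paper does, which also guarantees that global minimizers exist --- a point you must secure, since otherwise the universally quantified statement over global minimizers is vacuously true and your counterexample proves nothing.
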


\begin{proof}
See Appendix \ref{appendix: proof} for details.
\end{proof}

\begin{table}[t]
\centering
\renewcommand{\arraystretch}{0.9}
\caption{Statistics of datasets.}
\scalebox{0.9}{\begin{tabular}{l|rrr}
\toprule
\textbf{Dataset} & \textbf{\# users} & \textbf{\# items} & \textbf{\# interactions} \\
\hline
Amazon Books  & 603,668  & 367,982   & 8,898,041  \\
\hline
\makecell{Amazon Clothing,\\ Shoes and Jewelry}  & 1,219,678  & 376,858   & 11,285,464  \\
\hline
RetailRocket  & 33,708  & 81,635   & 356,840  \\
\hline
Rednote        &  238,960,609 & 91,784,192 & 26,989,379,219  \\
\bottomrule
\end{tabular}}
\label{table: dataset Statistics}
\end{table}
\begin{table*}[htbp]
\footnotesize
\caption{Experimental results comparing the performance of \rec{} with baseline methods, across various metrics.}
\centering
\renewcommand{\arraystretch}{0.9}
\begin{tabular}{l|l|ccccccccccc|c}
\toprule
\textbf{Dataset} & \textbf{Metric} & \textbf{MIND} & \textbf{ComiRec} & \textbf{RE4} & \textbf{UMI} & \textbf{MGNM} & \textbf{TiMiRec} & \textbf{SINE} & \textbf{REMI} & \textbf{SimRec} & \textbf{DisMIR} & \textbf{\rec{}} & \textbf{Improve.} \\
\midrule

\multirow{6}{*}{\textbf{Amazon Books}} 
    & Recall@20 & 0.0438 & 0.0543 & 0.0599 & 0.0701 & 0.0727 & 0.0780 & 0.0822 & 0.0840 & 0.0852 & \uline{0.0879} & \textbf{0.0977} & 11.15\% \\
    & HR@20 & 0.0906 & 0.1109 & 0.1236 & 0.1416 & 0.1462 & 0.1589 & 0.1620 & 0.1676 & 0.1714 & \uline{0.1804} & \textbf{0.1932} & 8.08\% \\
    & NDCG@20 & 0.0339 & 0.0408 & 0.0463 & 0.0527 & 0.0543 & 0.0586 & 0.0618 & 0.0625 & 0.0640 & 0.0674 & \textbf{0.0743} & 10.24\% \\
    & Recall@50 & 0.0679 & 0.0850 & 0.0993 & 0.1052 & 0.1086 & 0.1143 & 0.1184 & 0.1197 & 0.1248 & \uline{0.1368} & \textbf{0.1541} & 12.62\% \\
    & HR@50 & 0.1380 & 0.1723 & 0.1979 & 0.2067 & 0.2133 & 0.2205 & 0.2246 & 0.2324 & 0.2413 & \uline{0.2634} & \textbf{0.2774} & 5.32\% \\
    & NDCG@50 & 0.0397 & 0.0480 & 0.0572 & 0.0594 & 0.0616 & 0.0634 & 0.0664 & 0.0668 & 0.0693 & \uline{0.0752} & \textbf{0.0821} & 9.16\% \\
\midrule

\multirow{6}{*}{\makecell{\textbf{Amazon }\\ \textbf{Clothing, Shoes}\\ \textbf{ and Jewelry}}} 
    & Recall@20 & 0.0343 & 0.0464 & 0.0496 & 0.0621 & 0.0655 & 0.0702 & 0.0722 & 0.0763 & 0.0774 & \uline{0.0800} & \textbf{0.0942} & 17.75\% \\
    & HR@20 & 0.0793 & 0.1011 & 0.1143 & 0.1286 & 0.1340 & 0.1459 & 0.1514 & 0.1539 & 0.1582 & \uline{0.1682} & \textbf{0.1787} & 6.24\% \\
    & NDCG@20 & 0.0306 & 0.0328 & 0.0425 & 0.0480 & 0.0499 & 0.0534 & 0.0554 & 0.0567 & 0.0594 & \uline{0.0659} & \textbf{0.0760} & 15.31\% \\
    & Recall@50 & 0.0628 & 0.0816 & 0.0943 & 0.1010 & 0.1045 & 0.1114 & 0.1127 & 0.1173 & 0.1215 & \uline{0.1314} & \textbf{0.1429} & 8.74\% \\
    & HR@50 & 0.1253 & 0.1647 & 0.1903 & 0.1967 & 0.2034 & 0.2117 & 0.2223 & 0.2284 & 0.2369 & \uline{0.2555} & \textbf{0.2698} & 5.59\% \\
    & NDCG@50 & 0.0378 & 0.0429 & 0.0510 & 0.0519 & 0.0544 & 0.0604 & 0.0625 & 0.0639 & 0.0669 & \uline{0.0738} & \textbf{0.0805} & 9.09\% \\
\midrule

\multirow{6}{*}{\makecell{\textbf{RetailRocket}}} 
    & Recall@20      & 0.0991 & 0.1035 & 0.1397 & 0.1519 & 0.1664 & 0.1958 & 0.2085 & 0.2129 & 0.2206 & \uline{0.2385} & \textbf{0.2637} & 10.57\% \\
    & HR@20    & 0.1429 & 0.1602 & 0.2103 & 0.2364 & 0.2585 & 0.2912 & 0.3078 & 0.3183 & 0.3285 & \uline{0.3524} & \textbf{0.3715} & 5.42\% \\
    & NDCG@20        & 0.0570 & 0.0609 & 0.0785 & 0.0875 & 0.0901 & 0.1033 & 0.1105 & 0.1198 & 0.1237 & \uline{0.1330} & \textbf{0.1501} & 12.86\% \\
    & Recall@50      & 0.1597 & 0.1666 & 0.2194 & 0.2423 & 0.2646 & 0.2928 & 0.3105 & 0.3160 & 0.3246 & \uline{0.3447} & \textbf{0.3845} & 11.55\% \\
    & HR@50    & 0.2464 & 0.2501 & 0.3174 & 0.3574 & 0.3786 & 0.4153 & 0.4377 & 0.4515 & 0.4605 & \uline{0.4815} & \textbf{0.5286} & 9.78\% \\
    & NDCG@50        & 0.0634 & 0.0684 & 0.0884 & 0.0974 & 0.1033 & 0.1167 & 0.1212 & 0.1281 & 0.1305 & \uline{0.1360} & \textbf{0.1551} & 14.04\% \\
\midrule

\multirow{6}{*}{\textbf{Rednote}} 
    & Recall@120 & 0.0588 & 0.0722 & 0.0771 & 0.0854 & 0.0890 & 0.0956 & 0.0970 & 0.1024 & 0.1042 & \uline{0.1083} & \textbf{0.1395} & 28.78\% \\
    & HR@120 & 0.1114 & 0.1348 & 0.1487 & 0.1623 & 0.1682 & 0.1803 & 0.1914 & 0.1956 & 0.1995 & \uline{0.2088} & \textbf{0.2330} & 11.58\% \\
    & NDCG@120 & 0.0517 & 0.0548 & 0.0618 & 0.0722 & 0.0746 & 0.0787 & 0.0811 & 0.0824 & 0.0839 & \uline{0.0873} & \textbf{0.1180} & 35.16\% \\
    & Recall@200 & 0.0850 & 0.1043 & 0.0886 & 0.1273 & 0.1311 & 0.1352 & 0.1367 & 0.1438 & 0.1498 & \uline{0.1639} & \textbf{0.2295} & 40.00\% \\
    & HR@200 & 0.1629 & 0.1982 & 0.2278 & 0.2371 & 0.2436 & 0.2529 & 0.2654 & 0.2684 & 0.2771 & \uline{0.2993} & \textbf{0.3064} & 2.38\% \\
    & NDCG@200 & 0.0578 & 0.0647 & 0.0734 & 0.0770 & 0.0791 & 0.0818 & 0.0846 & 0.0861 & 0.0895 & \uline{0.0975} & \textbf{0.1257} & 28.96\% \\
\bottomrule
\end{tabular}
\label{Table: overallResults}
\end{table*}

\section{Experiments}
To evaluate our method, we conduct experiments to answer the following research questions (RQs):
\begin{itemize}[leftmargin=*]
    \item \textbf{RQ1:} How does \rec{} perform compared to baselines?
    \item \textbf{RQ2:} How does \codebook{} look like and functions?
    \item \textbf{RQ3:} How does \rec{} perform in interest collapse (3-1) and interest evolution modeling (3-2)?
    \item \textbf{RQ4:} How do the ablations of the framework design (4-1) and the optimization technique (4-2) perform?
    \item \textbf{RQ5:} How do hyperparameters affect overall performance (5-1), interest collapse (5-2), and interest evolution modeling (5-3)?
    \item \textbf{RQ6:} How does \rec{} perform in real-world production?
\end{itemize}
\subsection{Experimental Settings}
\subsubsection{Datasets.}
We conduct experiments on four real-world datasets, with their statistics presented in Table~\ref{table: dataset Statistics}.

\begin{itemize}[leftmargin=*]
\item \textbf{Amazon}~\cite{ni2019justifying}: This dataset is derived from the Amazon Review Dataset. 
Following prior studies \cite{cen2020controllable, zhang2022re4, du2024disentangled}, we choose the 5-core subset "Book" and "Clothing, Shoes, and Jewelry" for evaluation.
\item \textbf{RetailRocket}~\cite{roman_zykov_noskov_artem_anokhin_alexander_2022}: This dataset is collected from a real-world e-commerce website. We treat views as implicit feedback and filter out users and items with less than 5 records.
\item \textbf{Rednote}: A large-scale industry dataset collected from a real-world content-sharing platform, Rednote (Xiaohongshu) for offline evaluation, containing user-view interactions with notes.
\end{itemize}

We split each user's interaction chronologically into 80\% for training, 10\% for validation, and 10\% for testing. The maximum sequence length is set to 100 for industry dataset and 20 for others.

\subsubsection{Baselines.}
We include MIND~\cite{li2019multi}, 
ComiRec~\cite{cen2020controllable}, RE4~\cite{zhang2022re4}, UMI~\cite{chai2022user}, 
MGNM~\cite{tian2022multi}, SINE~\cite{tan2021sparse}, TiMiRec \cite{wang2022target},
REMI~\cite{xie2023rethinking}, SimRec~\cite{liu2024attribute},
and DisMIR~\cite{du2024disentangled} for comparison.



\subsubsection{Implementation Settings.} 
All methods are optimized by Adam optimizer with $lr = 0.001$. 
For \rec{}, we use LeakyReLU except for MIPDM, which uses GELU.
In \cs{}, the encoder and decoder use MLPs, with hidden layers [256, 128, 64, 16] and [64, 128, 256], respectively.
The \codebook{} has $4$ \scbs{} with $M_{1,2,3,4}=32,16,8,4$ and $d=16$. 
\ds{} uses a $6$-layer GPT model, with $4$ attention heads, hidden size $16$. 
The loss weights $\beta$, $\lambda_1$, $\lambda_2$, and $\lambda_3$ are set to $0.25$, $0.2$, $1$, and $1$, respectively. 
In \ps{}, both the user tower and item tower use MLPs, with hidden layers [1024, 512, 256, 64] and [512, 512, 128, 64], respectively. 
The interest fusion network $z_{\text{fusion}}$ has hidden layers [256, 64].
The hyper-parameters for the user top-$K$ cache are set as $\varepsilon=3$, and $K=5$.
For baseline models, we use original hyperparameters whenever available.
Otherwise, we tune them for optimal performance.

\subsubsection{Evaluation Metrics} \label{sec: metrics}

\paragraph{Overall Metrics}\hspace{0.2em}
Following prior research~\cite{cen2020controllable}, we use Recall@N, HR@N, and NDCG@N to evaluate recommendation performance.




\tightparagraph{Metrics Parameter.}
We set $N$ to $20$ and $50$ for the first three datasets following \cite{cen2020controllable}.  
For the large-scale industry dataset \textit{Rednote}, Metric@$20$ and Metric@$50$ are too narrow for meaningful evaluation. 
Therefore, we use Metric@$120$ and Metric@$200$.

\subsection{Overall Performance (RQ1)}
We summarize the following findings from the overall performance in Table~\ref{Table: overallResults}.
To begin with, recent multi-interest methods such as DisMIR, SimRec, REMI, SINE, and TiMiRec significantly outperform earlier methods like MIND and ComiRec. This suggests the potential benefits of their efforts in mitigating interest collapse and improving interest evolution modeling. It is further supported by the metrics AMR@N and CUR@N reported in the latter Section.
Moreover, GemiRec achieves the best overall performance, especially pronounced in the large-scale Industry dataset. In this setting, where user interests are highly diverse and continuously evolving, addressing interest collapse and evolution plays a more crucial role in improving overall performance.
It indicates that the proposed interest quantization and generation framework is more effective at tackling the aforementioned challenges, which is further validated in Sections \ref{Interest Collapse and Evolution (RQ3)}.
In summary, \rec{} outperforms baselines, demonstrating its superior overall performance.


\begin{figure}
    \centering
    \begin{subfigure}{0.225\textwidth} 
        \centering
        \includegraphics[width=\linewidth]{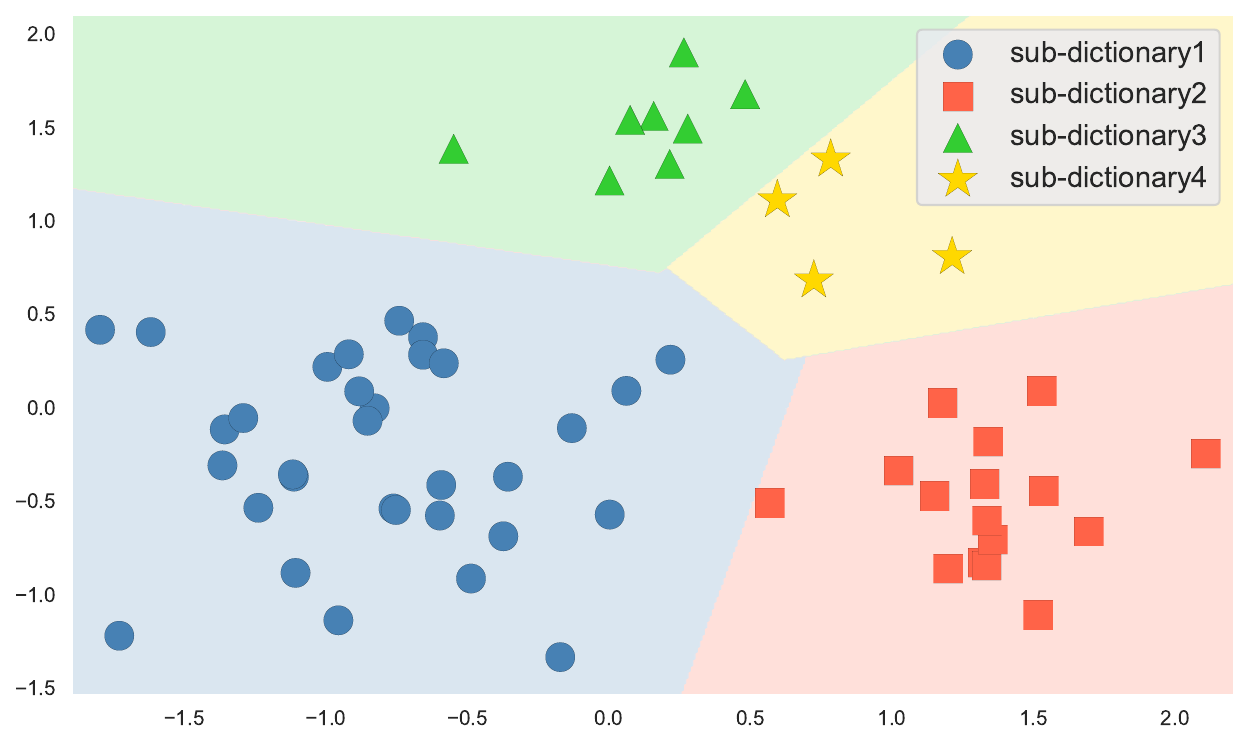} 
        \caption{\normalfont \scriptsize t-SNE visualization of the learned interest embeddings in the \codebook.} 
    \label{fig: independence}
    \end{subfigure}
    \hspace{0.0\textwidth} 
    \begin{subfigure}{0.225\textwidth} 
        \centering
        \includegraphics[width=\linewidth]{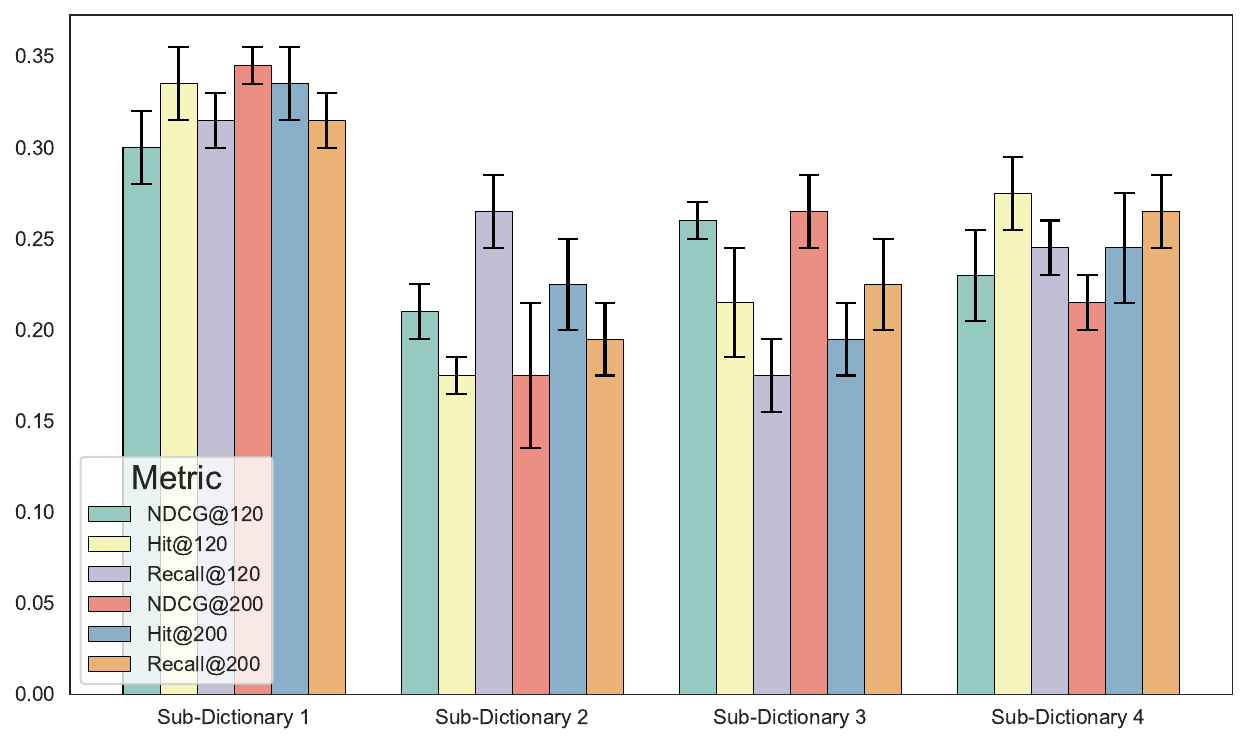} 
        \caption{\normalfont \scriptsize Frequency of each sub-dictionary acting as the most effective for recommendation.} 
    \label{fig: importance}
    \end{subfigure}
    \caption{Distribution and importance of \codebook.} 
    \label{fig: all 3 images}
\end{figure}

\subsection{Analysis of the \codebook{} (RQ2)}
\subsubsection{Interest Embeddings} \label{sec: independence-btw-cb}
To analyze the distribution of learned \codebook{}, we employ t-SNE~\cite{van2008visualizing} visualization on the industry dataset. 
Figure~\ref{fig: independence} shows that interests from different \scbs{} lie in distinct regions and exhibit uniform distribution patterns, indicating that each \scb~ captures distinct dimensions of user interests and learns representative embeddings.

\subsubsection{Importance of Sub-dictionaries}
We evaluate the necessity of each \scb{} by iteratively retaining one \scb{}’s interest indices in the top-$K$ predictions while randomizing the others, marking the highest metric scorer in each round as the winner. Results in Figure~\ref{fig: importance} show that the first \scb{} is most influential, yet others also contribute, validating the practical value of multi-dictionary design.
\subsubsection{Case study}
We present a case study on sampled sports news items, sharing the same indices---26 in the first sub-dictionary and 9 in the second. A closer observation shows that index 26 in the first sub-dictionary covers mainly sports, while index 9 in the second maps to trending events.


\subsection{Interest Collapse and Evolution (RQ3)} \label{Interest Collapse and Evolution (RQ3)}
\subsubsection{Analysis of Interest collapse (RQ3-1)} \label{Interest collapse}
We introduce a \textit{Alignment Margin Relevance@N (AMR@N)} to evaluate the semantic separation between user-interest representation $\bm{u}_{k}$ and retrieved candidate sets $R_u^{(k)}$, which is computed as:
\[
\mathrm{AMR@N} = \frac{1}{|U|K} \sum_{u,k} \frac{1}{|R_u^{(k)}|} \sum_{x \in R_u^{(k)}}  \big[ \cos(\bm{u}_{k}, \bm{v_x}) - \max_{j \ne k} \cos(\bm{u}_{j}, \bm{v_x}) \big].
\]

From the AMR@N in Table~\ref{Table:generalizationPerformance}, we can draw the following observations.
First, earlier multi-interest approaches like ComiRec~\cite{cen2020controllable}, do exhibit relatively severe interest collapse.
Secondly, recent methods such as 
REMI~\cite{xie2023rethinking}, SimRec~\cite{liu2024attribute}, and DisMIR~\cite{du2024disentangled} 
have achieved encouraging progress, but they still exhibit a degree of redundancy.
Lastly, \rec{} achieves the highest AMR scores, validating its effort in mitigating interest collapse, i.e., as discussed in Section \ref{sec: introduction},  the interest quantization
mechanism with the top-K indices selection.

\begin{table} [t!] 
\renewcommand{\arraystretch}{1.05}
\centering
 \caption{Performance in terms of interest collapse (AMR) and interest evolution modeling (CUR) on the Industry dataset.}
\scalebox{0.85}{
\begin{tabular}{l|cc|cc}
    \toprule
    \multicolumn{1}{l|}{\multirow{2}{*}{\textbf{Methods}}}  &  \multicolumn{4}{c}{\textbf{Industry}} \\
    \cline{2-5}
    & \multicolumn{1}{c}{AMR@120} & \multicolumn{1}{c|}{AMR@200} & \multicolumn{1}{c}{CUR@120} & \multicolumn{1}{c}{CUR@200} \\
    \hline
    
    \textbf{MIND}  
        & 0.0394 & 0.0402 & 0.0134 & 0.0184 \\
    \textbf{ComiRec}  
        & 0.0420 & 0.0415 & 0.0182 & 0.0204 \\
    
    \textbf{RE4}  
        & 0.0733 & 0.0719 & 0.0189 & 0.0227 \\

    \textbf{UMI}  
        & 0.0749 & 0.0721  & 0.0195 & 0.0217 \\

    \textbf{MGNM}  
        & 0.0779 & 0.0741 & 0.0375 & 0.0410 \\

    \textbf{SINE}  
        & 0.1448 & 0.1407 & 0.0191 & 0.0233 \\
    \textbf{TiMiRec}  
        & 0.0965 & 0.0927 & \underline{0.0460} & \underline{0.0649} \\
    
    \textbf{REMI}  
        & 0.1565 & 0.1487 & 0.0202 & 0.0259 \\

    \textbf{SimRec}  
        & 0.1572 & 0.1500 & 0.0198 & 0.0239 \\
    
    \textbf{DisMIR}  
        & \underline{0.1686} & \underline{0.1596} & 0.0194 & 0.0223 \\
    
    \hline
    \textbf{\rec{}} & \textbf{0.2104} & \textbf{0.2046} & \textbf{0.0842} & \textbf{0.1245} \\
    \bottomrule
\end{tabular}}
   \label{Table:generalizationPerformance}
\end{table}

\subsubsection{Analysis of Interest evolution (RQ3-2)} \label{Interest evolution}
We introduce a \textit{Category-Unseen Recall@N (CUR@N)}. It evaluates the recall ratio of user-interacted items from categories absent in the user's historical behavior sequence $\mathcal{I}_u$, which is computed as:
\[
\text{CUR@N} = \frac{1}{|U|} \sum_{u \in U} \frac{|R_u^N \cap J_u|}{|J_u|}
\]
where \(R_u^N\) is the top-N recommended item set for user \(u\), and \(J_u\) is the set of interacted items from categories absent in $\mathcal{I}_u$. 

From the CUR@N in Table~\ref{Table:generalizationPerformance}, we have the following findings: First, the capability of multi-interest methods in interest evolution modeling falls short. 
Secondly, MGNM~\cite{tian2022multi} and TiMiRec~\cite{wang2022target} have made great progress as expected.
Lastly, \rec{} outperforms baselines, demonstrating its effectiveness in interest evolution modeling, which can be attributed to, as discussed in Section~\ref{sec: introduction}, the design of interest generation, which is decoupled from the user tower and explicitly models evolving interests through an independent next-interest prediction task.

\subsubsection{Case study}
We present a case study on the relationship between learned user interests and interacted items. Specifically, we sample users from the industry dataset and project their interest representations $\mathbf{u}_{1,\cdots,K}$, along with embeddings of previously and future-interacted items using t-SNE~\cite{van2008visualizing}. Figure~\ref{fig: Interst collapse} suggests that \rec{} better captures diverse user interests and models their evolution than ComiRec. Similar advantages are observed over other methods, though we only present ComiRec here.

\begin{figure}
    \centering
    \includegraphics[width=\linewidth]{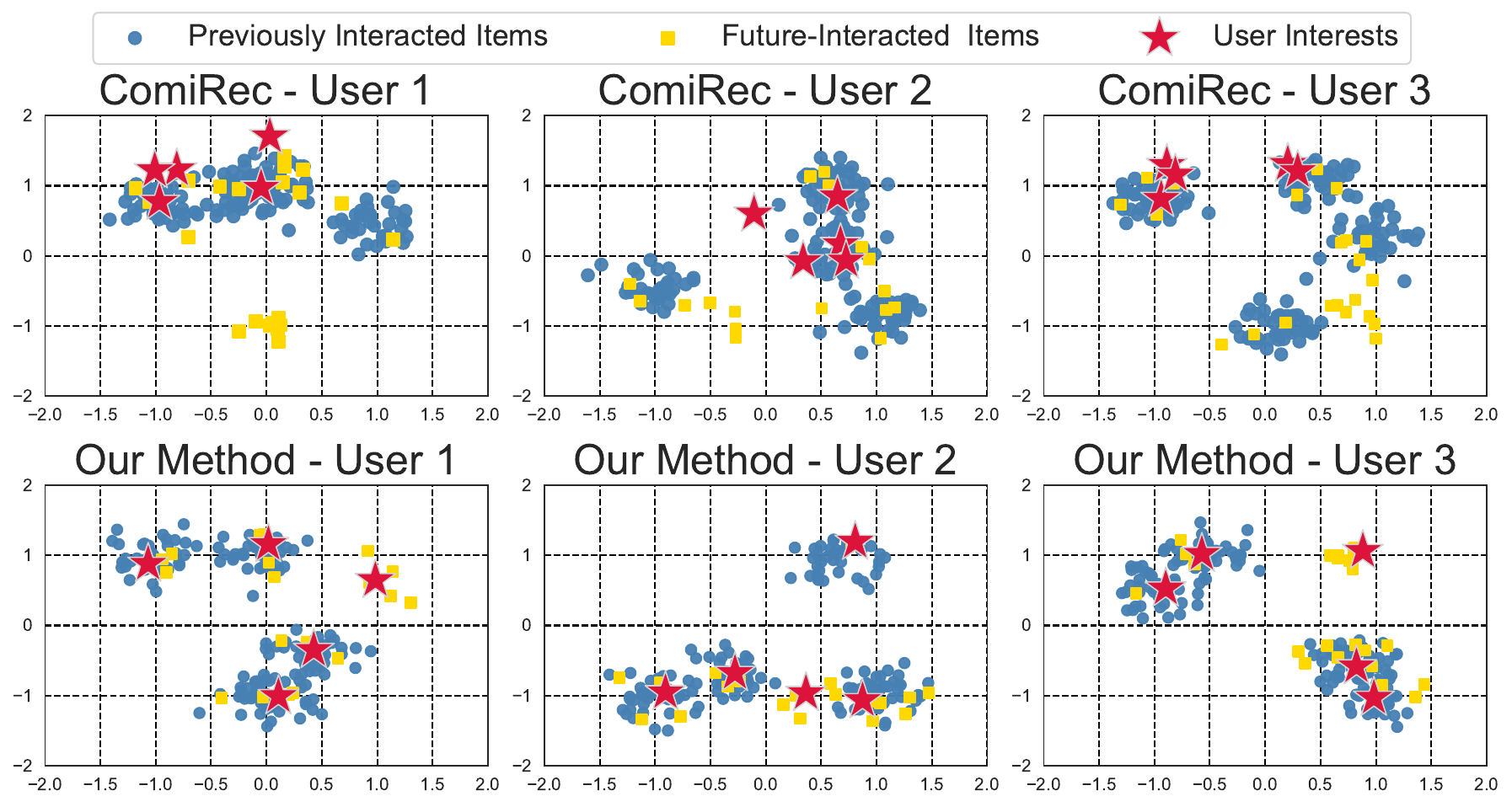}        \caption{\normalfont t-SNE visualization of user-interest representations, with embeddings of previously interacted and future-interacted items.} 
    \label{fig: Interst collapse}
\end{figure}
\subsection{Ablation Study (RQ4)}
\subsubsection{Module Ablation (RQ4-1)}
In this section, we conduct module ablation experiments on several variants of ~\rec{}, as follows:
\begin{itemize}[label=--, leftmargin=1em]
    \item \textbf{\rec{}-I}: Replaces the \codebook{} in \cs{} with preset categories as interest indices.
    \item \textbf{\rec{}-J}: Removes joint training between \cs{} and \ps{}.
    \item \textbf{\rec{}-U}: Removes user condition from the input in \ds{}.
    \item \textbf{\rec{}-M}: Replaces \ds{} by predicting top-$K$ interests based solely on their historical frequency.
\end{itemize}

Table~\ref{tab: module ablation} shows that the complete \rec{} consistently outperforms its variants, highlighting the importance of each component:
(1) The learned \codebook{} captures user interests more effectively than preset categories 
(2) Joint training between \cs{} and \ps{} enhances the adaptiveness of \codebook{} to downstream recommendation tasks.
(3) The user condition in ~\ds{} supplies crucial side information for accurate interest generation.
(4) \ds{} outperforming frequency-based predictions underscores the importance of explicitly modeling evolving user preferences.

Besides, the variants of \rec{} still outperform baselines, indicating that the primary gains originate from the proposed framework-level refinement, i.e, interest quantization and generation.

\begin{table}[t] 
\centering
\caption{Module Ablation Experiments.}
\scalebox{0.92}{
\begin{tabular}{|c|c|c|c|c|c|}
\hline
\textbf{Amazon-Book} & {\rec{}*} & {-I}& {-J} & {-U} & {-M} \\
\hline
Recall@20 & \textbf{0.0977} & 0.0911 &	0.0948	& 0.0915	& 0.0883 \\
\hline
Recall@50 & \textbf{0.1541} & 0.1380 &	0.1486	& 0.1403 &0.1356 \\
\hline
\textbf{Metric} & {\rec{}*} & {-I}& {-J} & {-U} & {-M} \\
\hline
Recall@120 & \textbf{0.1395} & 0.1169 & 0.1227 & 0.1196 & 0.1124 \\
\hline
Recall@200 & \textbf{0.2295} & 0.1845 & 0.2055 & 0.1805 & 0.1723 \\
\bottomrule
\end{tabular}}
\label{tab: module ablation}
\end{table}
\begin{table}[t] 
\caption{Ablation study of optimization techniques.
}
\centering
\scalebox{0.785}{
\begin{tabular}{|p{2.cm}|p{1.6cm}|p{1.6cm}|p{2.cm}|p{1.6cm}|}
\hline
\textbf{Metric} & \textbf{\rec{}} & \textbf{w/o 3-stage training} & \textbf{w/o kmeans initialization} & \textbf{w/o preset categories} \\
\hline
Converged Step & $\approx 500,000$ & $\text{NaN}$ & $\approx 450,000$ & $\approx 750,000$\\
\hline
\text{Utilization} & 93.3\% & -- & 21.6\% & 87.9\%
\\
\hline
Recall@120 & \textbf{0.1395} & -- & 0.1241 & 0.1306 \\
\hline
Recall@200 & \textbf{0.2295} & --  & 0.2079 & 0.2198 \\
\hline
\end{tabular}}
\label{tab: Ablation Study of Optimization Techniques.}
\end{table}

\subsubsection{Optimization Techniques Ablation(RQ4-2)}
We perform ablation studies on optimization techniques introduced in Section ~\ref{Interest Dictionary Updates}.  
The results shown in Table~\ref{tab: Ablation Study of Optimization Techniques.}, lead to the following findings:  
(1) The model fails to converge without the 3-stage training strategy.  
(2) The k-means initialization effectively improves the utilization of codebook from $21.6\%$ to $93.3\%$.  
(3) Initializing the first \scbs~by predefined categories not only accelerates convergence but also enhances performance by integrating external prior knowledge.

\begin{table}[t] 
\small
\centering
\caption{Hyperparameter experiments on Industry Dataset.}
\begin{subtable}{\linewidth}
    \centering
    \abovecaptionskip=2pt
    \begin{tabularx}{\linewidth}{|p{1.28cm}|p{0.62cm}|p{0.68cm}|p{0.95cm}|p{1.3cm}|X|}
    \hline
    \textbf{Metric} & \textbf{32} & \textbf{32-16} & \textbf{32-16-8} & \textbf{32-16-8-4*} & \textbf{32-16-8-4-4} \\
    \hline
    Recall@120 & 0.1169 & 0.1209 & 0.1254 & \textbf{0.1395} & 0.1303 \\
    \hline
    Recall@200 & 0.1845 & 0.1988 & 0.2075 & \textbf{0.2295} & 0.2236 \\
    \hline
    AMR@120 & 0.1925 & 0.2001 & 0.2045 & \textbf{0.2104} & 0.2090 \\
    \hline
    AMR@200 & 0.1861 & 0.1939 & 0.1976 & \textbf{0.2046} & 0.2023 \\
    \hline
    \end{tabularx}
    \caption{Dictionary sizes in \cs.}
    \label{tab:Performance of IDMM with Different Dictionary Sizes}
\end{subtable}
\begin{subtable}{\linewidth}
    \centering
    \abovecaptionskip=2pt
    \begin{tabularx}{\linewidth}{|X|X|X|X|X|}
    \hline
    \textbf{Metric} & \textbf{2 Layers} & \textbf{4 Layers} & \textbf{6 Layers*} & \textbf{8 Layers} \\
    \hline
    Recall@120 & 0.1180 & 0.1256 & 0.1395 & \textbf{0.1402} \\
    \hline
    Recall@200 & 0.2017 & 0.2108 &  0.2295 & \textbf{0.2320} \\
    \hline
    CUR@120 & 0.0756 & 0.0799 & 0.0842 & \textbf{0.0851} \\
    \hline
    CUR@200 & 0.1147 & 0.1184 &  0.1245 & \textbf{0.1263} \\
    \hline
    \end{tabularx}
    \caption{Number of GPT layers in \ds.}
    \label{tab:Effect of MIPDM Complexity}
\end{subtable}
\begin{subtable}{\linewidth}
\renewcommand{\arraystretch}{1.}
    \centering
    \abovecaptionskip=2pt
\begin{tabularx}{\linewidth}{|p{0.9cm}|X|p{1.65cm}|p{1.65cm}|}
\hline
\textbf{Model} & \textbf{Coeff ($\lambda_\text{1}/\lambda_{\text{2}}/\lambda_{\text{3}}$) }
& \textbf{Recall@120} & \textbf{Recall@200}\\ \hline
\text{default}         & 0.2/1/1              & \textbf{0.1395}        & \textbf{0.2295}         \\ \hline
\multirow{2}{*}{$\lambda_{1}$}  
               & \textbf{2}/1/1              & 0.1240        & 0.2141\\ 
               & \textbf{0.02}/1/1            & 0.1355        & 0.2248\\ \hline
\multirow{2}{*}{$\lambda_{2}$}  
               & 0.2/\textbf{10}/1               & 0.1390        & 0.2297\\ 
               & 0.2/\textbf{0.1}/1            & 0.1394        & 0.2294\\ \hline
\multirow{2}{*}{$\lambda_{3}$}  
               & 0.2/1/\textbf{10}              & 0.1365        & 0.2268\\ 
               & 0.2/1/\textbf{0.1}            & 0.1265        & 0.2177\\ \hline
\end{tabularx}
\caption{Loss weights for different modules.}
\label{tab:Loss weights}
\end{subtable}


\label{tab: hyperparameter experiments}
\end{table}

\subsection{Hyperparameter Experiments (RQ5)}\label{Hyper-parameter Experiments}
\begin{figure}[t]
    \centering
    \begin{subfigure}{0.228\textwidth} 
        \centering
    \includegraphics[width=\linewidth]{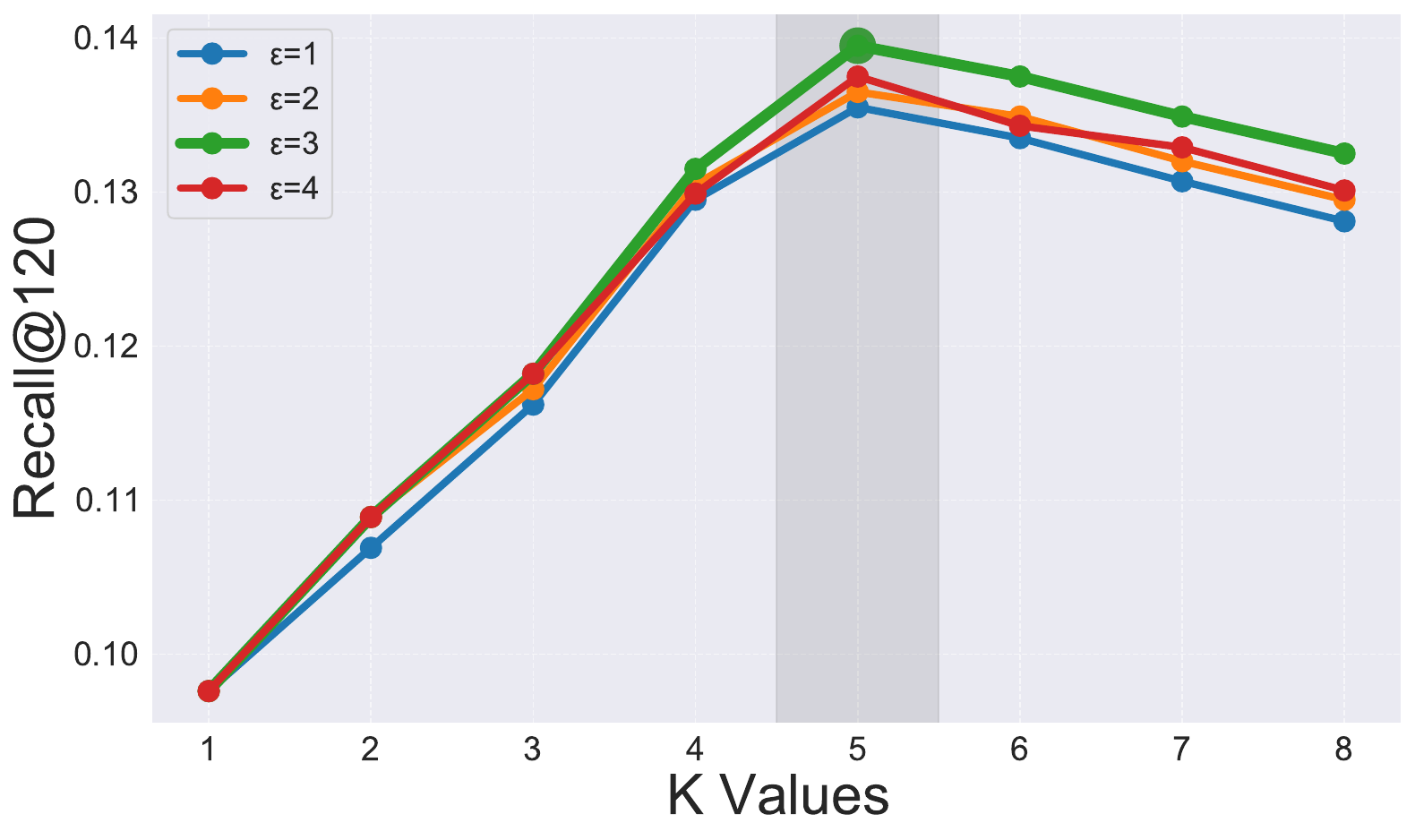} 
        \caption{\normalfont \scriptsize  Recall@120} 
        \label{fig: frequency-subcodebook}
    \label{fig: hyperparameter@120}
    \end{subfigure}
    \hspace{0.0\textwidth} 
    \begin{subfigure}{0.228\textwidth} 
        \centering
        \includegraphics[width=\linewidth]{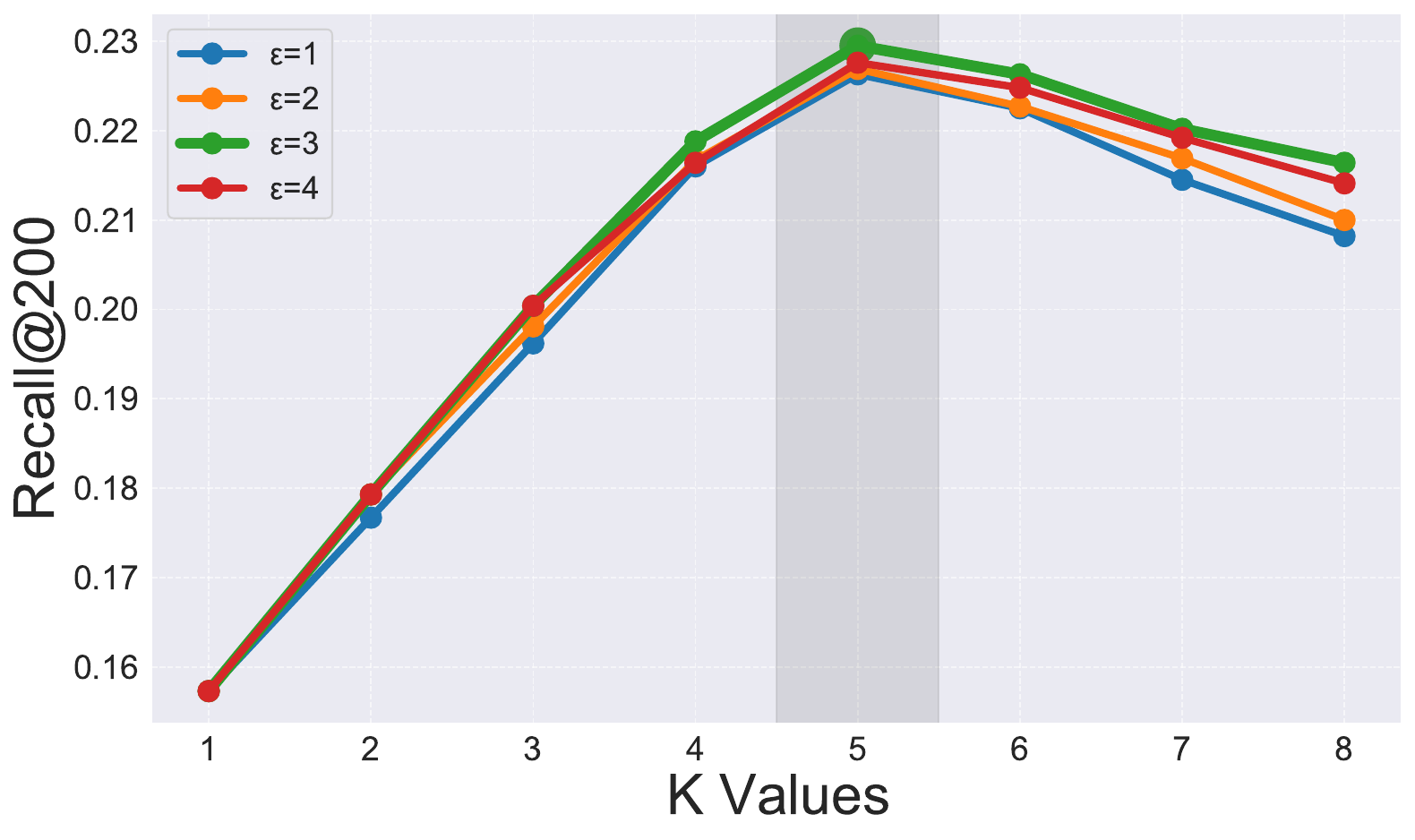} 
        \caption{\normalfont \scriptsize Recall@200} 
        \label{fig: PCA}
    \label{fig: hyperparameter@200}
    \end{subfigure}
    \caption{Hyperparameter analysis on K and $\varepsilon$.} 
    \label{fig:hyperparameter}
\end{figure}

\tightparagraph{\hspace{1.2em}\codebook{} sizes (RQ5-1/5-2).}
We evaluate different \codebook{} sizes in \cs{} 
(Table~\ref{tab:Performance of IDMM with Different Dictionary Sizes}). The trend observed in AMR@N, which reflects the interest collapse, aligns with the trend of the overall performance with respect to the Interest Dictionary size, indicating the importance of choosing an appropriate number of sub-dictionaries and a moderate overall quantization space size.

\tightparagraph{Number of GPT layers (RQ5-1/5-3).}
We vary the number of GPT layers in \ds{}. Table~\ref{tab:Effect of MIPDM Complexity} demonstrates consistent performance improvements with more layers, though the gains diminish beyond $6$. CUR@N shows a similar trend.

\tightparagraph{Task weights (RQ5-1/5-2).}
As shown in Table~\ref{tab:Loss weights}, the best performance is achieved when  $\lambda_1 = 0.2$, $\lambda_2 = 1$, and $\lambda_3 = 1$, aligning with our expectation that the update speed of the \codebook{} from \cs{} should be slower than that from \ps{}, which is directly responsible for the final recommendation. For $\lambda_1$, $\lambda_2$, and $\lambda_3$, we observe that $\lambda_2$ is relatively flexible, as \ds{} operates as a separate module without shared components, whereas maintaining a balanced ratio between $\lambda_1$ and $\lambda_3$ is crucial. 

\tightparagraph{$\varepsilon$ and $K$ (RQ5-1).}
As shown in Figure~\ref{fig:hyperparameter}, $K$ has a more significant impact. 
A small $K$ may be insufficient to capture multiple interests, while an excessively large $K$ introduces unrelated interests. 
In contrast, the effect of $\varepsilon$ follows the opposite pattern: a small value increases noise, whereas a large value reduces diversity. Therefore, well-tuned $\varepsilon$ and $K$ yield the optimal results.

\begin{table}[t] 
\small
\centering
\caption{Online A/B test on a content-sharing platform, Rednote.}
\scalebox{0.93}{\begin{tabular}{p{1.1cm}|p{1cm}|p{0.7cm}|p{0.7cm}|p{1.2cm}|p{2.cm}}
\noalign{\hrule height 1.0pt}
\textbf{Scenario} & \textbf{Duration} & \textbf{Click} & \textbf{CTR} & \textbf{Click UV} & \textbf{Next-day Active} \\
\hline
\textbf{Video} & +0.38\% & +0.37\% & +0.22\% & +0.07\% & +0.08\% \\
\hline
\textbf{Note} &	+0.26\%	& +0.51\% &	+0.32\%	& +0.08\% & +0.09\% \\

\hline
\end{tabular}}
\label{Table: onlinePerformance}
\end{table}

		

\begin{table}[t] 
\small
\centering
\caption{Computational cost in FLOPs and runtime efficiency under identical hardware settings. Latency/throughput are reported relative to the ComiRec baseline (1.00$\times$).}
\scalebox{0.88}{
\begin{tabular}{l|ccc|cc}
\noalign{\hrule height 1.0pt}
\multirow{2}{*}{\textbf{Method}} & \multicolumn{3}{c|}{\textbf{FLOPs}} & \multicolumn{2}{c}{\textbf{Runtime cost}}\\
\cline{2-6}
& \textbf{\cs} & \textbf{\ds} & \textbf{\ps} & \textbf{Latency} & \textbf{Throughput } \\
\hline
\rec{} Training & 0.36M & 3.63M & 24.25M & -- & -- \\
\rec{} Inference & -- & -- & 24.25M & 0.99$\times$ & 1.01$\times$ \\
ComiRec & -- & -- & 24.31M & 1.00$\times$ & 1.00$\times$ \\
\bottomrule
\end{tabular}}
\label{Table: additionalflopsanalysis}
\end{table}

\subsection{Online Experiments (RQ6)}
\subsubsection{Online Performance}
As shown in Table~\ref{Table: onlinePerformance}, a two-week A/B test conducted on the homepage of a content-sharing platform, Rednote (Xiaohongshu), which serves hundreds of millions of daily active users, shows statistically significant improvements across multiple recommendation scenarios and metrics at 95\% confidence level. The proposed \rec{} has been fully deployed in production since March 2025, showing its practical value.

\subsubsection{Computational Cost}
As shown in Table~\ref{Table: additionalflopsanalysis}, our method introduces a small increase in training cost compared to baselines, and such overhead is generally not a bottleneck in industrial systems. 
In deployment, where inference efficiency is more critical, \rec{} maintains comparable FLOPs, latency, and throughput to ComiRec variants that share similar inference characteristics, indicating its applicability in real-world scenarios. All methods were trained and evaluated under identical hardware settings for fairness.



\section{Conclusion}
In this paper, we propose a novel generative multi-interest recommendation framework, \rec. 
The framework introduces interest quantization and generation to address the inherent limitations of existing multi-interest methods from a new perspective. 
Theoretical and empirical analyses, together with extensive experiments and online A/B tests, demonstrate the superiority of the framework.
Furthermore, it has been deployed in production on a content-sharing platform, Rednote, confirming its practical value in industrial applications. In the future, we will explore advanced quantization and enhance the interest 
generation to better unlock the potential of the framework.

\clearpage

\bibliographystyle{ACM-Reference-Format}
\bibliography{references}
\appendix
\section*{Appendix}
\section{Math Symbols}
\label{sec:symbol}
\begin{table}[H]
\centering
\caption{Summary of math symbols.}
\label{tb:symbol}
\renewcommand{\arraystretch}{1.}
\scalebox{0.85}{
\begin{tabular}{ll}
\hline
Symbol & Meaning \\
\hline
$\mathcal{U}, \mathcal{I}$ & Set of users and items \\
$\mathcal{I}_u$ & Interaction sequence of user $u$ \\
$\hat{y_k}(\cdot)$ & Preference score of k-th user-interst representation  \\
$\bm{E}^c \in \mathbb{R}^{M_c \times d}$ & Sub-dictionary for the $c$-th interest dimension \\
$\bm{E}^*$ & Entire \codebook, $\bm{E}^* \in \mathbb{R}^{(\prod_{c}M_c) \times (Cd)}$ \\
$\bm{r}_c$ & The $c$-th residual during interest quantization \\  
$\bm{e}_{m^c}^c \in \mathbb{R}^{d} $ & The $m^c$-th interest in the $c$-th \scb\\
$\bm{e}_{m^*} \in \mathbb{R}^{Cd}$ & Quantified multi-dimensional interest embedding\\
$\bm{p}_{u},{\hat{\bm{p}_{u}}}$ & The ground-truth and predicted future interest distribution\\
$\bm{u}, \bm{v}$ & User/item embedding from the user/item tower \\
$\bm{u}_k$ & The $k$-th user-interest representation \\
\hline
\end{tabular}}
\end{table}

\section{Theoretical Proof}\label{appendix: proof}
For convenience, throughout the proofs we set $E^*\subset\mathbb{R}^{d}$ 
rather than $\subset\mathbb{R}^{Cd}$, which does not affect the generality of the results.
\begin{definition}[Voronoi Partition]\label{definition: Voronoi}
Let $E^*=\{e_1,\dots,e_{|E^*|}\}\subset\mathbb{R}^{d}$ be a finite set.
The Voronoi cell associated with $e_m$ is defined as
\[
V_m = \{x \in \mathbb{R}^d : \|x-e_m\| \le \|x-e_n\|, \ \forall n\neq m\}.
\]
The collection $\{V_m\}_{m=1}^M$ is called the Voronoi partition of $\mathbb{R}^d$.
\end{definition}

\begin{proposition}[Equivalent Characterization]\label{proposition: Equivalent}
A collection $\{V_m\}_{m=1}^M$ is the Voronoi partition induced by $E^*$ if and only if it satisfies the following properties:
\begin{enumerate}
    \item \textbf{Covering:} $\bigcup_{m=1}^{|E^*|} V_m = \mathbb{R}^d$.
    \item \textbf{Cell structure:} Each $V_m$ can be written as the intersection of finite closed halfspaces bounded by perpendicular bisectors.
    \item \textbf{Disjointness:} $V_m \cap V_n = \varnothing$ for $m\neq n$, and overlaps occur only on boundaries of measure zero.
    \item \textbf{Nearest-neighbor consistency:} Almost everywhere, $x\in V_m$ if and only if $e_m$ is the unique nearest neighbor of $x$ in $E$.
\end{enumerate}
\end{proposition}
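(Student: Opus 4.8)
The plan is to prove the two implications of the biconditional separately, treating the forward direction (the canonical cells of Definition~\ref{definition: Voronoi} satisfy the four properties) as the primary content and the backward direction (any collection satisfying the four properties must coincide with those cells) as a uniqueness argument.

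For the forward direction I would fix the cells $V_m=\{x:\|x-e_m\|\le\|x-e_n\|\ \forall n\neq m\}$ and verify each property in turn. Covering follows immediately from finiteness of $E^*$: every $x$ admits at least one nearest neighbor $e_m$, hence lies in $V_m$. For the cell structure, I would expand the defining inequality $\|x-e_m\|^2\le\|x-e_n\|^2$ into the linear form $2(e_n-e_m)^\top x\le\|e_n\|^2-\|e_m\|^2$, which is a closed halfspace whose boundary is exactly the perpendicular bisector of $e_m$ and $e_n$; intersecting over the finitely many $n\neq m$ exhibits $V_m$ as a finite intersection of such halfspaces. Disjointness and the measure-zero boundary claim then reduce to observing that any overlap $V_m\cap V_n$ forces $\|x-e_m\|=\|x-e_n\|$, so $x$ lies on a single bisector hyperplane, and a finite union of hyperplanes has Lebesgue measure zero in $\mathbb{R}^d$. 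Nearest-neighbor consistency is essentially the definition: $x\in V_m$ iff $e_m$ realizes the minimum distance, and off the bisector hyperplanes this minimizer is unique.

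For the backward direction I would start from an arbitrary collection $\{V_m\}$ satisfying the four properties and show it equals the canonical partition $\{\tilde V_m\}$. Property~(4) already forces $V_m$ and $\tilde V_m$ to agree Lebesgue-almost-everywhere, since both are characterized a.e.\ by the unique-nearest-neighbor condition. The remaining task is to upgrade this a.e.\ agreement to exact set equality. Here property~(2) is crucial: it guarantees each $V_m$ is a closed convex polyhedron, and one checks that each $\tilde V_m$ is full-dimensional because it contains a small ball around its generator $e_m$ (as $\|e_m-e_n\|>0$ for $n\neq m$). Thus each cell is regular closed, i.e.\ $V_m=\overline{\mathrm{int}\,V_m}$, and two regular closed sets coinciding up to a null set share the same interior and hence the same closure, giving $V_m=\tilde V_m$.

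The main obstacle will be precisely this closure/regularity step in the backward direction: \emph{agreeing almost everywhere} is a statement about interiors, whereas the claimed equality concerns closed cells including their shared boundaries, so I must rule out pathological collections that differ from the canonical cells only on a lower-dimensional set yet still satisfy properties~(1)--(3). I would handle this by invoking the polyhedral cell structure to show each $V_m$ equals $\overline{\mathrm{int}\,V_m}$, and then use the covering property together with disjointness of interiors to argue that the boundaries cannot be reassigned inconsistently, so the a.e.\ identification propagates to the shared boundaries and yields exact equality.
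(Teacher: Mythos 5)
Your proposal is correct, but it cannot match the paper's proof for a simple reason: the paper never proves this proposition. In the appendix it is stated without proof, as the standard textbook characterization of Voronoi diagrams, and is then used as a criterion inside the proof of the subsequent proposition (that the nearest-neighbor quantizer induces a Voronoi partition). Your forward direction --- covering via finiteness of $E^*$, rewriting $\|x-e_m\|^2\le\|x-e_n\|^2$ as the closed halfspace $2(e_n-e_m)^\top x\le\|e_n\|^2-\|e_m\|^2$ bounded by the perpendicular bisector, disjointness up to a finite union of bisector hyperplanes of Lebesgue measure zero, and the definitional nearest-neighbor consistency --- reproduces essentially verbatim the verification the paper carries out in that later proof, so this half adds nothing new, merely attaching the computation to a different proposition. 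The genuinely new content is your backward (uniqueness) direction, which the paper leaves entirely implicit, and your treatment of it is sound: property (2) makes each $V_m$ a closed convex polyhedron; a.e.\ agreement from property (4) with the canonical cell $\tilde V_m$, which contains the ball $B(e_m,r)$ with $r=\tfrac{1}{2}\min_{n\neq m}\|e_m-e_n\|>0$, gives $V_m$ positive measure, hence (by convexity) nonempty interior, hence $V_m=\overline{\mathrm{int}\,V_m}$; and two regular closed sets agreeing up to a null set coincide. You correctly flag this regularity step as the crux --- it is exactly what rules out pathological reassignment of lower-dimensional boundary pieces. One small remark: with this argument, properties (2) and (4) alone already force $V_m=\tilde V_m$, so your final appeal to covering and disjointness of interiors is unnecessary; conditions (1) and (3) end up redundant (though harmless) in the backward direction.
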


\begin{proposition}[Proof of Interest Quantization Induces a Voronoi Partition]
Let $E^*=\{e_1,\dots,e_{|E^*|}\}\subset\mathbb{R}^{d}$  be a finite interest Dictionary. Define the nearest-neighbor quantizer
\[
q:\mathbb{R}^d \to E,\qquad 
q(x)\in \arg\min_{e\in E^*}\|x-e\|_2 .
\]
Then the quantization rule $q$ induces the Voronoi partition $\{V_m\}_{m=1}^M$ of $\mathbb{R}^d$, and almost everywhere
\[
x\in V_m \iff q(x)=e_m.
\]
\end{proposition}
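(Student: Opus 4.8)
The plan is to unwind the definitions and show that the assignment performed by $q$ coincides, almost everywhere, with membership in the Voronoi cells of Definition~\ref{definition: Voronoi}. First I would observe that the condition $q(x)=e_m$ means precisely that $e_m\in\arg\min_{e\in E^*}\|x-e\|_2$, i.e.\ $\|x-e_m\|_2\le\|x-e_n\|_2$ for every $n\neq m$. This is exactly the defining inequality of the cell $V_m$, so $q(x)=e_m$ immediately yields $x\in V_m$. Conversely, if $x\in V_m$ then $e_m$ attains the minimum distance, so $e_m$ is a valid value of $q(x)$; the only gap is that the $\arg\min$ need not be a singleton when $x$ is equidistant from two or more codewords.

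The second step is therefore to quantify the set of ambiguous points and show it is negligible. I would introduce the \emph{tie set} $B=\bigcup_{m\neq n}\{x\in\mathbb{R}^d:\|x-e_m\|_2=\|x-e_n\|_2\}$. Each member $\{x:\|x-e_m\|_2=\|x-e_n\|_2\}$ is the perpendicular bisector hyperplane between $e_m$ and $e_n$ (well defined since distinct codewords satisfy $e_m\neq e_n$), an affine subspace of dimension $d-1$ and hence of Lebesgue measure zero. Because $E^*$ is finite there are finitely many such pairs, so $B$ is a finite union of null sets and $|B|=0$. On the complement $\mathbb{R}^d\setminus B$ the nearest codeword is unique, so $q$ is single-valued there and the equivalence $x\in V_m\iff q(x)=e_m$ holds exactly; this is the claimed almost-everywhere statement.

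Finally I would confirm that $\{V_m\}$ genuinely constitutes the Voronoi partition by verifying the criteria of Proposition~\ref{proposition: Equivalent}: covering holds because every $x$ has at least one nearest codeword; the cell structure follows from writing $V_m=\bigcap_{n\neq m}\{x:\|x-e_m\|_2\le\|x-e_n\|_2\}$ as an intersection of finitely many closed halfspaces bounded by the bisectors; disjointness up to boundaries holds since distinct interiors are separated by strict inequalities and any overlap is contained in $B$; and nearest-neighbor consistency is precisely the almost-everywhere equivalence just established.

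I expect the only genuinely delicate point to be the measure-zero argument for the tie set $B$, namely justifying that each bisector is a hyperplane and that a finite union of such hyperplanes carries no Lebesgue mass. Everything else is a direct translation between the quantizer's $\arg\min$ condition and the cell-membership inequalities, so the structural content is light once the boundary set is correctly isolated and dismissed.
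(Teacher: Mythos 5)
Your proposal is correct and follows essentially the same route as the paper's proof: both establish covering, express each cell $V_m$ as a finite intersection of closed halfspaces bounded by perpendicular bisectors, show the tie set is a finite union of measure-zero hyperplanes, and conclude the almost-everywhere equivalence between $q(x)=e_m$ and $x\in V_m$. Your only organizational difference is starting from the quantizer--cell correspondence and isolating the tie set first, but the mathematical content matches the paper's four-step verification of Proposition~\ref{proposition: Equivalent}.
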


\begin{proof}
Following Definition \ref{definition: Voronoi} and Proposition \ref{proposition: Equivalent}, we prove that the quantization satisfies the following properties and induces a Voronoi Partition.
\textbf{(1) Covering.}
For any $x\in\mathbb{R}^d$, the finite set $\{\|x-e\|_2: e\in E\}$ attains its minimum. Let $e_m$ be a minimizer. Then $\|x-e_m\|_2 \le \|x-e_n\|_2$ for all $n$, so $x\in V_m$. Hence $\mathbb{R}^d=\bigcup_{m=1}^M V_m$.

\medskip
\textbf{(2) Convexity of cells.}
For fixed $m\neq n$, define
\[
H_{m,n}=\{x:\ \|x-e_m\|_2 \le \|x-e_n\|_2\}.
\]
This inequality is equivalent to
\[
2\langle x, e_n-e_m\rangle \le \|e_n\|_2^2-\|e_m\|_2^2,
\]
which describes a closed halfspace bounded by the perpendicular bisector of $e_m$ and $e_n$. Thus
\[
V_m = \bigcap_{n\neq m} H_{m,n}
\]
is an intersection of finitely many closed halfspaces, hence convex and closed.

\medskip
\textbf{(3) Disjointness up to boundaries.}
If $x\in V_m\cap V_n$ with $m\neq n$, then $\|x-e_m\|_2=\|x-e_n\|_2$. The union of such equality sets
\[
B\triangleq \bigcup_{m<n}\{x:\ \|x-e_m\|_2=\|x-e_n\|_2\}
\]
is a finite union of hyperplanes (perpendicular bisectors), which has Lebesgue measure zero. Thus
\[
V_m^\circ \cap V_n^\circ = \varnothing \quad (m\neq n),
\]
and the regions are mutually disjoint except on a zero measure set.

\medskip
\textbf{(4) Consistency with nearest-neighbor quantization.}
If $x\notin B$, then there exists a unique $m$ such that $\|x-e_m\|_2<\|x-e_n\|_2$ for all $n\neq m$. By definition, $q(x)=e_m$, and simultaneously $x\in V_m$. Conversely, if $q(x)=e_m$, then the inequalities hold and hence $x\in V_m$. Therefore,
\[
x\in V_m \iff q(x)=e_m, \quad \text{for almost every } x.
\]

Combining (1)--(4), we conclude that $\{V_m\}$ is precisely the Voronoi partition induced by the sites $\{e_m\}$, and that the nearest-neighbor quantizer coincides with this partition almost everywhere.

\end{proof}

\begin{proposition}[Proof of From \codebook{} separability to retrieval separability]
If two interests are separated by at least $\Delta$, assume the fusion function $z_{\text{fusion}}(u,e)$ satisfies a local Lipschitz-type condition in its second argument, then there exists a constant $0<\alpha \le \infty$ such that their fused retrieval vectors satisfy
\[
\|z_{\text{fusion}}(u,e_m) - z_{\text{fusion}}(u,e_n)\|_2 \;\ge\; \alpha \,\Delta .
\]
In cosine-similarity maximum inner product search (MIPS), with normalized outputs, this further implies a strictly positive score-margin lower bound, which in turn upper-bounds the overlap between the Top-$N$ candidate sets of the two interests.
\end{proposition}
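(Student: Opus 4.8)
The plan is to handle the statement in two stages that mirror its two sentences: a metric separation of the fused retrieval vectors, and the downstream consequence for cosine-similarity retrieval. For the first stage, the crux is to read the ``local Lipschitz-type condition'' as a \emph{co-Lipschitz} (lower-Lipschitz / expansiveness) bound rather than the usual upper bound, since only a lower bound can \emph{guarantee} separation. Concretely, I would assume that for fixed $u$ there is a constant $\alpha>0$ with $\|z_{\text{fusion}}(u,e)-z_{\text{fusion}}(u,e')\|_2 \ge \alpha\,\|e-e'\|_2$ on the neighbourhood containing the codebook points. The first inequality is then immediate: setting $e=e_m,\,e'=e_n$ and using the hypothesis $\|e_m-e_n\|_2\ge\Delta$ — which Corollary~\ref{corollary: Amplified Separation} already supplies with $\Delta=\sqrt{C}\,\min_c\delta_c$ once the index tuples differ — gives $\|z_{\text{fusion}}(u,e_m)-z_{\text{fusion}}(u,e_n)\|_2 \ge \alpha\Delta$. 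The formal value $\alpha=\infty$ is merely a notational device for a degenerate fusion and can be set aside.

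For the second stage I would normalize the outputs, writing $\hat u_m=z_{\text{fusion}}(u,e_m)/\|z_{\text{fusion}}(u,e_m)\|_2$ and likewise $\hat u_n$, and exploit the identity $\|\hat u_m-\hat u_n\|_2^2=2\,(1-\cos\theta)$ with $\theta=\angle(\hat u_m,\hat u_n)$. Because the outputs are normalized, the Euclidean gap $\alpha\Delta$ transfers to an angular margin $\cos\theta\le 1-\tfrac{1}{2}(\alpha\Delta)^2$, i.e.\ $\theta$ is bounded below by a strictly positive amount.

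The overlap bound then follows from the spherical (geodesic) triangle inequality. For a normalized item $\hat v$ with cosine scores $s_m=\hat u_m^\top\hat v$ and $s_n=\hat u_n^\top\hat v$, one has $\arccos s_m=\angle(\hat u_m,\hat v)$, so $\angle(\hat u_m,\hat v)+\angle(\hat v,\hat u_n)\ge\theta$. Any item inside interest $m$'s Top-$N$ cone (score $\ge\tau_m$, equivalently within angle $\arccos\tau_m$ of $\hat u_m$) therefore lies at angle at least $\theta-\arccos\tau_m$ from $\hat u_n$, so its score for interest $n$ is at most $\cos(\theta-\arccos\tau_m)$ — the claimed score-margin lower bound. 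Whenever $\theta>\arccos\tau_m+\arccos\tau_n$ this margin pushes the item below interest $n$'s threshold $\tau_n$, forcing the two Top-$N$ sets to be disjoint; more generally the overlap shrinks monotonically as $\theta$ grows.

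The main obstacle is interpretive rather than computational. I must (i) justify that ``Lipschitz-type'' here means the co-Lipschitz / expansiveness direction that actually yields a \emph{lower} bound, and (ii) show the separation survives normalization — which is precisely why the ``normalized outputs'' clause is required, since without controlling $\|z_{\text{fusion}}\|$ two Euclidean-far vectors could still be angularly aligned and the cosine margin would vanish. A secondary subtlety is that the overlap guarantee is inherently threshold-dependent: $\tau_m,\tau_n$ are the data-dependent $N$-th largest scores, so the conclusion should be phrased as an upper bound on the overlap in terms of these thresholds rather than as a universal data-independent constant, and I would make that dependence explicit to keep the claim honest.
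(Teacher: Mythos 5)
Your proposal is correct, and on the core inequality it coincides exactly with the paper's argument: the paper also reads the ``local Lipschitz-type condition'' in the co-Lipschitz (lower-bound) direction, i.e.\ $\|z_{\text{fusion}}(u,e_m)-z_{\text{fusion}}(u,e_n)\|_2 \ge \alpha\|e_m-e_n\|_2$, and then substitutes $\|e_m-e_n\|_2\ge\Delta$ — your interpretive point (i) is precisely the reading the authors intend, since an ordinary upper Lipschitz bound would yield nothing. Where you genuinely diverge is in the second half: the paper's proof stops at an assertion (``If outputs are $\ell_2$-normalized, this separation translates into a strictly positive margin in cosine similarity. Consequently, the overlap \ldots{} is strictly upper-bounded.''), whereas you actually derive it — the chord--angle identity $\|\hat u_m-\hat u_n\|_2^2 = 2(1-\cos\theta)$ giving $\cos\theta \le 1-\tfrac12(\alpha\Delta)^2$, then the spherical triangle inequality to bound the score of any item in interest $m$'s Top-$N$ cone by $\cos(\theta-\arccos\tau_m)$, and the disjointness criterion $\theta>\arccos\tau_m+\arccos\tau_n$. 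Your explicit caveat that the overlap bound is necessarily threshold-dependent (on the data-dependent $N$-th largest scores $\tau_m,\tau_n$) is an honest sharpening that the paper's unqualified claim lacks, and your observation that unnormalized Euclidean separation alone cannot yield an angular margin explains why the ``normalized outputs'' clause is indispensable — another point the paper leaves implicit. The one ingredient the paper has that you omit is a plausibility argument for the co-Lipschitz hypothesis itself: the authors note (with citations) that it can be enforced under common architectures such as LeakyReLU with spectral-norm constraints on the weight matrices, which matters because for a generic fusion MLP an expansiveness lower bound is a nontrivial assumption, not a property one gets for free.
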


\begin{proof}
The argument follows from the Lipschitz-type assumption, for any user embedding $u$,
\[
\|z_{\text{fusion}}(u,e_m)-z_{\text{fusion}}(u,e_n)\|_2 \;\ge\; \alpha \|e_m-e_n\|_2 .
\]
Since $d(e_m,e_n) \ge \Delta$, we obtain
\[
\|z_{\text{fusion}}(u,e_m)-z_{\text{fusion}}(u,e_n)\|_2 \;\ge\; \alpha \,\Delta .
\]
If outputs are $\ell_2$-normalized, this separation translates into a strictly positive margin in cosine similarity. Consequently, the overlap between the corresponding Top-$N$ candidate sets is strictly upper-bounded.
The local Lipschitz-type condition can be held~\cite{li2019approximate,oberman2018lipschitz} under common architectures such as LeakyReLU with spectral norm constraints on each weight matrix.
\end{proof}


\begin{proposition}
\label{prop:reg-no-sep}
(Regularization does not imply structural separation.)  
For any finite regularization weight $\lambda$ and any continuous penalty $\mathcal{R}$, there does not exist a 
data-independent constant $\Delta > 0$ such that all global minimizers satisfy
\[
\min_{k \neq \ell}\|\mathbf{u}_k - \mathbf{u}_\ell\|_2 \;\ge\; \Delta.
\]
\end{proposition}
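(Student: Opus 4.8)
The plan is to prove a \emph{non-existence} statement by contradiction, and the natural strategy is to construct, for every fixed finite $\lambda$ and every continuous penalty $\mathcal{R}$, a data instance whose global minimizer forces two interest representations arbitrarily close together. Concretely, suppose a data-independent constant $\Delta>0$ existed such that every global minimizer of the regularized objective $\mathcal{L}_{\mathrm{rec}}(\{\mathbf u_k\})+\lambda\,\mathcal{R}(\{\mathbf u_k\})$ satisfies $\mathrm{sep}(\{\mathbf u_k\}):=\min_{k\neq\ell}\|\mathbf u_k-\mathbf u_\ell\|_2\ge\Delta$. I would then derive a contradiction by exhibiting data for which the minimizer provably violates this bound. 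The key conceptual point is that $\mathcal{R}$ is a \emph{soft}, data-agnostic term with a \emph{fixed} finite weight $\lambda$, whereas the recommendation loss can be made arbitrarily steep by the data; a bounded repulsion cannot resist an unboundedly strong attraction toward collapse.

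The construction I would use is a family of data instances $\{D_\epsilon\}_{\epsilon>0}$ (for instance, a user whose supervised positive targets all lie along a single direction, so that the loss-optimal configuration is the fully collapsed one $\mathbf u_1=\cdots=\mathbf u_K$) whose recommendation loss $\mathcal{L}_{\mathrm{rec}}^{\epsilon}$ is \emph{sharpened} as $\epsilon\to 0$, e.g. by scaling item embeddings or lowering the softmax temperature so that the loss behaves like $\tfrac{1}{\epsilon}\sum_{k\neq\ell}\|\mathbf u_k-\mathbf u_\ell\|_2^2$ near the collapse point while remaining coercive (so that minimizers exist and stay in a fixed bounded region). The quantitative property I need from this family is a diverging steepness gap: fixing a competitor separation $\delta=\Delta/2$,
\[
\inf_{\mathrm{sep}\ge\Delta}\mathcal{L}_{\mathrm{rec}}^{\epsilon}(\{\mathbf u_k\})\;-\;\inf_{\mathrm{sep}=\delta}\mathcal{L}_{\mathrm{rec}}^{\epsilon}(\{\mathbf u_k\})\;\longrightarrow\;+\infty\quad\text{as }\epsilon\to 0 .
\]

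The comparison step then closes the argument. By continuity, $\mathcal{R}$ is bounded on the relevant bounded region of configurations, so the penalty difference $\lambda\big[\mathcal{R}(\text{some sep-}\Delta\text{ config})-\mathcal{R}(\text{some sep-}\delta\text{ config})\big]$ is bounded below by a finite constant independent of $\epsilon$. Combining this finite penalty overhead with the diverging loss gap, for all sufficiently small $\epsilon$ there is a configuration with $\mathrm{sep}=\delta<\Delta$ whose total objective is strictly smaller than that of \emph{every} configuration with $\mathrm{sep}\ge\Delta$; hence the global minimizer must have separation below $\Delta$, contradicting the assumed uniform bound. I would also remark that the same balance argument survives penalties that diverge at collapse (e.g.\ $1/\|\mathbf u_k-\mathbf u_\ell\|_2$): comparing at a fixed positive separation $\delta$ keeps $\mathcal{R}$ finite there, and the equilibrium separation still shrinks with the data-controlled steepness, so no \emph{data-independent} $\Delta$ can survive.

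The main obstacle I anticipate is making the steepness property rigorous while simultaneously guaranteeing existence of global minimizers and the boundedness of $\mathcal{R}$ on the operative domain: I must choose the family $D_\epsilon$ so that $\mathcal{L}_{\mathrm{rec}}^{\epsilon}$ is coercive (pinning all near-optimal configurations inside a fixed compact set on which the continuous $\mathcal{R}$ is bounded), yet has curvature blowing up at collapse. A clean way to discharge this is to read off the steepness and coercivity directly from an explicit loss (softmax/dot-product retrieval loss of the paper) rather than assuming them abstractly, so that the diverging-gap display above is an elementary estimate rather than a hypothesis.
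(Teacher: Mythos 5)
Your overall mechanism is the right one, and it is in fact the same mechanism the paper uses in its central counterexample: a data-driven, unbounded gap in the task loss between separated and collapsed configurations, pitted against a penalty difference that continuity of $\mathcal{R}$ keeps bounded on the operative compact set. The paper bounds the penalty gain by $\lambda\binom{K}{2}R(2)$ and makes the loss gap diverge via the number of samples $T$, where you instead propose a steepness parameter $1/\epsilon$; these are interchangeable. The paper additionally runs two independent counterexamples that need no steepness at all: scale invariance of dot-product scores (rescale $\mathbf{u}_k \mapsto c\,\mathbf{u}_k$, $\mathbf{v}\mapsto \mathbf{v}/c$, so separation shrinks at zero loss cost) and null-space indeterminacy when item embeddings lie in a proper subspace (the components of $\mathbf{u}_k$ orthogonal to the item span are invisible to the loss and can be collapsed freely).

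However, your plan has a genuine gap precisely at the step you flag as the ``main obstacle,'' and it is not merely technical. The construction you propose --- positives along a single direction with a sharpened loss ``behaving like $\tfrac{1}{\epsilon}\sum_{k\neq\ell}\|\mathbf{u}_k-\mathbf{u}_\ell\|_2^2$ near collapse'' --- is not realizable by the losses multi-interest methods actually minimize. In MIND/ComiRec-style training (and in the paper's own $\mathcal{L}_{\ps}$), each positive item is scored only against its assigned (argmax or ground-truth) interest; the remaining $\mathbf{u}_k$ are unconstrained by the task loss. Hence, with all positives along a direction $a$, a configuration with one $\mathbf{u}_k$ aligned to $a$ and the others placed arbitrarily far apart achieves exactly the same task loss as the fully collapsed configuration: your gap $\inf_{\mathrm{sep}\ge\Delta}\mathcal{L}^{\epsilon} - \inf_{\mathrm{sep}=\delta}\mathcal{L}^{\epsilon}$ is then at most zero, not divergent, no matter how you scale embeddings or sharpen the softmax, and a diversity-encouraging penalty will happily separate the unused interests for free. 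What is missing is an objective in which \emph{every} interest's alignment with the dominant data mode enters the loss. That is exactly what the paper's vMF counterexample supplies: $\ell_2$-normalized representations (which also resolves your coercivity/compactness worry), an item distribution $p\,\mathrm{vMF}(a,\kappa_1)+(1-p)\,\mathrm{vMF}(b,\kappa_2)$ with $p$ dominant, and a per-interest expected score, so that any $\mathbf{u}_j$ forced to angle at least $\theta$ from $a$ loses at least $\delta(\theta,\kappa_1)>0$ per dominant sample; over $T$ samples the cumulative gap $T\,p\,\delta(\theta,\kappa_1)$ eventually exceeds the bounded regularization gain $\lambda\binom{K}{2}R(2)$. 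If you replace your abstract steepness hypothesis with such an explicit construction in which collapse is strictly (not weakly) loss-optimal, your comparison argument closes correctly; as written, the key lemma is asserted for a loss family that does not contain the losses in question.
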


\begin{proof}[Proof.]
We argue by contradiction with several counterexamples to demonstrate cases where regularization fails to offer a lower-bound separation.

\begin{counterexample}[Absence under Scale invariance in dot-product retrieval]\label{counter:scale-invariance}
In dot-product or cosine retrieval, the score is invariant under rescaling:
\[
\langle \mathbf{u}, \mathbf{v} \rangle = \langle c\mathbf{u}, \tfrac{1}{c}\mathbf{v} \rangle, \quad c > 0.
\]
Thus, user embeddings can be arbitrarily shrunk while item embeddings are scaled accordingly, driving 
$\min_{k \neq \ell} \|\mathbf{u}_k - \mathbf{u}_\ell\|_2 \to 0$
, thereby circumventing distance-based regularizers.
\end{counterexample}
\begin{counterexample}[Absence under cosine normalization with Dominant Modality]\label{counter:Dominant Modality}
Assume $\ell_2$-normalized $\hat u_k,\hat v \in \mathbb S^{d-1}$. Let the positive item distribution be
\[
\hat v \sim p \cdot \mathrm{vMF}(a,\kappa_1) \;+\; (1-p)\cdot \mathrm{vMF}(b,\kappa_2),
\]
where $a,b \in \mathbb S^{d-1}$ with $\angle(a,b)>0$ and $p \gg (1-p)$.  

For the collapsed solution $\hat u_1=\cdots=\hat u_K=a$, we have
\[
\mathbb E[\hat u_k^\top \hat v \mid \hat v\sim \mathrm{vMF}(a,\kappa_1)] = \alpha(\kappa_1),
\]
which maximizes the expected score on the dominant component.  

If a fixed separation $\Delta>0$ (equivalently, angle $\theta>0$) is imposed, then some $\hat u_j$ must satisfy $\angle(\hat u_j,a)\ge \theta$, hence
\[
\mathbb E[\hat u_j^\top \hat v \mid \hat v\sim \mathrm{vMF}(a,\kappa_1)] \le \alpha(\kappa_1)\cos\theta.
\]
Thus the expected loss on the $p$-fraction dominant samples increases by at least a constant $\delta(\theta,\kappa_1)>0$. For $T$ positive samples, the cumulative gap is at least $T p \delta(\theta,\kappa_1)$.  

Meanwhile, the maximum possible gain from regularization is bounded:
\[
\Delta\mathcal R \le \lambda \binom{K}{2} R(2).
\]
For sufficiently large $T$, we obtain
\[
T p \delta(\theta,\kappa_1) > \lambda \binom{K}{2} R(2),
\]
so the collapsed solution minimizes the overall objective with
\[
\min_{k\ne \ell}\|\hat u_k-\hat u_\ell\|=0.
\]

Hence, even under cosine normalization, a finite $\lambda$ cannot guarantee a data-independent lower-bound separation.
\end{counterexample}
\begin{counterexample}[Span/Null-space indeterminacy.]\label{counter:indeterminacy}
Let all item embeddings lie in a proper linear subspace 
$S \subset \mathbb{R}^d$ of rank $r<d$. 
Decompose each user-interest vector as 
$u_k = s_k + n_k$ with $s_k \in S$ and $n_k \in S^\perp$. 

Notice that scores depend only on the projection onto $S$:
\[
\langle u_k, v_i \rangle 
= \langle s_k + n_k, v_i \rangle
= \langle s_k, v_i \rangle,
\]
since $v_i \in S$ and $n_k \perp S$. Thus, the recommendation loss $L_{\text{task}}$ 
ignores all $n_k$ components. 

Then fix $\{s_k\}$. For any $\varepsilon>0$, choose 
$\{n_k'\} \subset S^\perp$ such that 
$\max_{k\neq \ell}\|n_k' - n_\ell'\|\le \varepsilon$. 
This leaves $L_{\text{task}}$ unchanged, while making
\[
\min_{k\neq \ell}\|u_k' - u_\ell'\|
= \min_{k\neq \ell}\|(s_k+n_k')-(s_\ell+n_\ell')\|
\]
arbitrarily small, and in particular below any fixed $\Delta > 0$. 

Thereby, the null-space freedom allows embeddings to collapse in directions irrelevant to retrieval, contradicting the claim that regularization universally enforces a positive separation margin.
\end{counterexample}

Overall, Counterexample~\ref{counter:scale-invariance} demonstrates that scale invariance in the retrieval objective 
allows the pairwise distance to shrink arbitrarily without affecting overall loss. 
Counterexample~\ref{counter:Dominant Modality} further 
shows that even under normalized cosine similarity, when the positive item 
distribution is dominated by a single modality, the collapsed solution  still strictly minimizes the overall objective as the data size grows Counterexample~\ref{counter:indeterminacy} illustrates that the null-space components of user-interest representations, when item embeddings lie in a low-rank subspace, can be freely adjusted without worsening the overall objective, again reducing the minimum separation arbitrarily. 
Therefore, there does not exist a 
data-independent constant $\Delta > 0$ such that all global minimizers satisfy
\[
\min_{k \neq \ell}\|\mathbf{u}_k - \mathbf{u}_\ell\|_2 \;\ge\; \Delta.
\]
\end{proof}


\end{document}